\documentclass[pra,aps,amsmath, amssymb, twocolumn, superscriptaddress,longbibliography]{revtex4-2}

\usepackage{graphicx}
\usepackage[colorlinks=true, linkcolor=red, urlcolor=blue]{hyperref}
\usepackage{physics}
\usepackage{color}
\usepackage{enumerate}
\usepackage{bm}
\usepackage[normalem]{ulem}
\usepackage{comment}
\usepackage{dsfont}
\usepackage{multirow}
\usepackage{tikz}
\usetikzlibrary{arrows}
\usetikzlibrary{backgrounds}

\usepackage[caption=false]{subfig}

\usepackage{braket}
\usepackage{amsmath}

\usepackage{amsthm}
\usepackage{amsfonts, amssymb}
\usepackage{bbm}

\newtheorem{lemma}{Lemma}
\newtheorem{theorem}{Theorem}
\newtheorem{remark}{Remark}

\usepackage{mathtools}
\usepackage{mathrsfs}
\usepackage{orcidlink}


\newcommand{\ergo}{\mathcal{E}}

\newcommand{\en}{\mathfrak{E}}

\newcommand{\dstate}{\hat{\rho}}

\newcommand{\ham}{\hat{H}}

\newcommand{\adc}{\Phi_{\gamma}}
\newcommand{\gadc}{\Phi_{\gamma,\eta}}
\newcommand{\multiergo}{\mathcal{J}_{\mathcal{E}}}
\newcommand{\ADCn}{\Phi^{\otimes n}_\gamma}

\newcommand{\fixedergo}{\bar{\mathcal{E}}}

\begin{document}
	
	\title{Quantum work extraction efficiency for noisy quantum batteries:\\  the role of coherence} 
 
\author{Salvatore Tirone\orcidlink{0000-0002-4880-4329}}
\email{s.tirone@uva.nl}
\affiliation{Scuola Normale Superiore, I-56126 Pisa, Italy}
\affiliation{QuSoft, Science Park 123, 1098 XG Amsterdam, the Netherlands}
\affiliation{Korteweg--de Vries Institute for Mathematics, University of Amsterdam, Science Park 105-107, 1098 XG Amsterdam, the Netherlands}
 
	\author{Raffaele Salvia}
\email{raffaele.salvia@sns.it}
	\affiliation{Scuola Normale Superiore, I-56126 Pisa, Italy}
    
	\author{Stefano Chessa}
 \email{schessa@illinois.edu}
	\affiliation{NEST, Scuola Normale Superiore and Istituto Nanoscienze-CNR, I-56126 Pisa, Italy}
	\affiliation{Electrical and Computer Engineering, University of Illinois Urbana-Champaign, Urbana, Illinois, 61801, USA}
    
	\author{Vittorio Giovannetti}
	\affiliation{NEST, Scuola Normale Superiore and Istituto Nanoscienze-CNR, I-56126 Pisa, Italy}
	
	\begin{abstract}		
Quantum work capacitances and maximal asymptotic work/energy ratios
are figures of merit characterizing the robustness against noise of work extraction processes in quantum batteries formed by collections of quantum systems. In this paper we establish a direct connection between these functionals and, exploiting this result, we analyze different types of noise models mimicking self-discharging (amplitude damping channels), thermalization (generalized amplitude damping channels) and dephasing effects (simultaneous phase-amplitude damping channels). In this context we show that input quantum coherence can significantly improve the storage performance of noisy quantum batteries and that the maximum output ergotropy is not always achieved by the maximum available input energy. 
	\end{abstract}
	
	\maketitle

\section{Introduction}

The use of quantum effects have been proposed as useful methods to speed up 
charging processes of batteries \cite{Alicki2013, Campaioli2018, Andolina2019, Farina2019, Rossini2019, rossini2019quantum, JuliFarr2020, PhysRevA.107.032218, erdman2022reinforcement}. To assess the stability of these proposals it is crucial to analyze how such schemes perform in the presence of environmental noise, as this represents a more realistic scenario, as it has been proved by the first experimental implementations \cite{Quach2022, Hu_2022, exp_battery, exp_battery1, exp_battery2, exp_battery3},  and can have a significant impact on the efficiency of energy recovery. For noiseless models capacities for Quantum Batteries (QBs) have been proposed \cite{def_ergo, JuliFarr2020, battery_cap2023} while, from a resource theoretical point of view, the thermodynamic capacity (in the sense of process simulability) of quantum channels has been defined \cite{Thermodyn_Cap2019}. In the presence of noise, in terms of energetic manipulation, a few results have been obtained in specific frameworks, see e.g.  \cite{Carrega_2020, Bai_2020, Tabesh_2020, Ghosh_2021, Santos_2021, Zakavati_2021, Landi_2021, Morrone_2022, Sen_2023,Liu2019, PhysRevE.100.032107, PhysRevApplied.14.024092, Bai_2020, PhysRevE.101.062114, PhysRevResearch.2.013095,  liu2021boosting, PhysRevE.103.042118, PhysRevE.105.054115}. In an effort to generalize and facilitate the comparison between different physical platforms, two types of universal figures of merit, i.e. the quantum work capacitances and  the maximal asymptotic work/energy ratios (MAWERs), have been introduced in~\cite{quantumworkcapacitances}.These quantities gauge different aspects of the efficiency of work extraction from noisy quantum battery systems  providing a parallel point of view over the concept of efficiency of other quantum thermodynamic objects \cite{Guryanova2016,PhysRevE.96.012128,PhysRevE.97.012129}, such as e.g. quantum heat engines, which model alternative thermodynamic operational protocols \cite{gemmer_quantum_2009, deffner2019quantum, Goold2016}. Here, the aforementioned new figures of merit are useful in scenarios where one has at disposal large collections of identical energy storing quantum devices: quantum cells or q-cells in brief.
Specifically, quantum work capacitances target the maximum work that can be recovered per q-cell 
assuming that on average each cell allocates a fixed portion of the total energy $E$ initially stored in the overall battery.  For MAWERs, on the contrary,
the work extraction efficiency  is gauged treating  the q-cells  as a free resource, assuming their number $n$ to be much larger than the minimal amount that could host the full input energy $E$.   
The evaluations of quantum work capacitances and of MAWERs rely on complex and resource-constrained optimization problems  
similar to those one faces in quantum communication~\cite{nielsen_chuang_2010, HOLEGIOV, ADV_Q_COMM, HAYASHIBOOK, WILDEBOOK, SURVEY, HOLEVOBOOK}, where encoding/decoding strategies must be designed to mitigate the effects of dissipation and decoherence -- see Fig.~\ref{fig:noise_bat}.
Closed formulas that enable one to explicitly compute the value of some of the 
work capacitances have been recently presented in Ref. \cite{tirone2023work}. Here we extend these findings to the MAWERs, proving that the latter can be expressed as a special limit of the associated quantum work capacitance functionals.
Using these results we next investigate 
several models of noisy batteries describing a wide range of deterioration effects that are physically relevant,  including self-discharging, thermalization, and dephasing.
In this context we find that the work that can be extracted from the QB is not always a monotonic  function of the energy that was initially stored in the system. As a consequence we show that, for some noise models, an incompletely charged battery may perform better than a fully charged one.
Additionally, we observe that the presence of quantum coherence strengthens the resistance of QBs against environmental noise. Specifically, when it comes to self-discharging and thermal noise, we show that coherent input states outperform all conceivable classical (i.e. incoherent) initial states of the battery. This result strengthens the evidences built up in the literature in recent years, both in the noisy and the noiseless settings \cite{Pintos_2020, Caravelli2021energystorage, Shi_2022, Mayo_2022, Yu_2023}, that quantum coherence can provide an advantage in the quantum setting for energy manipulation tasks and proves to be a useful resource as happens for information processing applications \cite{Streltsov_2017, Saxena_2020, Kamin2020, Selby_2020}.  

The manuscript is organized as follows:
in Sec.~\ref{sec:definizioni} we introduce the problem and derive a general relation which, irrespective of the noise model, allows us to directly connect MAWERs to quantum work capacitances;
in Sec.~\ref{sec:canali} we study the values of these figures of merit for some specific noise models for quantum battery systems made of collections of two-level quantum cells;
conclusions are drawn in Sec.~\ref{sec:discuss}.

\bigskip

\begin{figure*}[t!]
\centering
\includegraphics[width=0.8\linewidth]{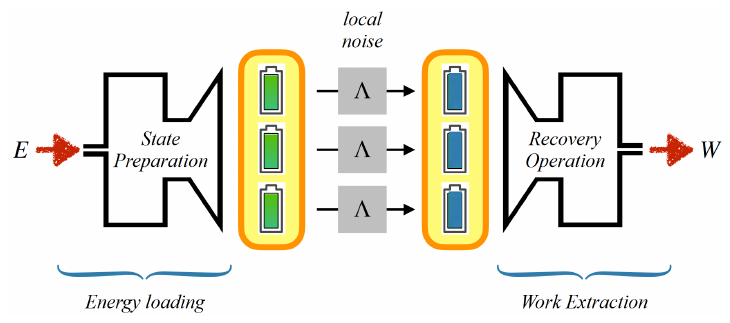} 
\caption{Schematic representation of a quantum work extraction protocol operating on a  noisy quantum battery composedby $n$ q-cells. Before the action of the noise (grey elements of the figure), some initial energy $E$ is loaded into the system via a state preparation  procedure (leftmost element of the figure) which mimics the encoding stage of quantum communication settings. After the noise action, some work $W$ is extracted form the system via recovery operations (rightmost element) playing the role of signal decoding.}
\label{fig:noise_bat}
\end{figure*}

\section{The model}
\label{sec:definizioni}
 The work extraction procedures we analyze are graphically depicted in Fig.~\ref{fig:noise_bat}. Similarly to~\cite{quantumworkcapacitances, tirone2023work}, they focus on QB models formed by a collection of $n$ identical and independent (not-interacting) elements (quantum cells or q-cells in brief), each capable to store energy in the internal degrees of freedom associated with the associated local Hamiltonians $\hat{h}_1$, $\hat{h}_2$, $\cdots$, $\hat{h}_n$ (without loss of generality hereafter we will assume the operator $\hat{h}$ to have zero-ground state eigenvalue). 

 In this setting an amount $E\geq 0$ of energy is loaded into the system by selecting an input density matrix $\dstate^{(n)}$ of the q-cells from the set of input states of fixed mean energy $E$. Such operation is mediated by a properly designed state preparation stage (left element of the figure) which, depending on the available resources, may or may not allow for the creation of quantum correlations among the various QB elements.
We let hence the system evolve under the action of environmental noise described in terms of a completely, positive, trace preserving (CPTP) super-operator $\Lambda$~\cite{STINE, CHOI1975285, kraus1983states} that acts identically and independently on each of the q-cells, i.e.
\begin{eqnarray} 
\label{transfnoise} 
\dstate^{(n)} \mapsto  \Lambda^{\otimes n}( \dstate^{(n)})\;. 
\end{eqnarray} 
 The negative impact of the transformation~(\ref{transfnoise}) is then evaluated by 
 using quantum work capacitance and MAWER functionals to compute the maximum amount of work that one can extract from  the deteriorated 
state of the QB, optimizing the performance over an assigned set of recovery operations (right element of the figure), possibly affected by resource  constraints.  

 \subsection{Quantum Work Capacitances} \label{sec:QWC} 
The first example of quantum work capacitance introduced in~\cite{quantumworkcapacitances} assumes the ergotropy functional~\cite{def_ergo} as
 measure of the extractable work that one can get from a quantum state.
For a $d$-dimensional system characterized by an Hamiltonian $\ham$, the ergotropy of an input state $\dstate$ can be expressed as  
\begin{eqnarray}\label{ergoeeeDEFinvariance} 
\ergo(\dstate;\ham)  := \max_{\hat{U} \in {\mathbb U}(d)}\Big\{  \en(\dstate;\ham) 
-\en(\hat{U} \dstate \hat{U}^\dag;\ham)\Big\}   \;,
\end{eqnarray}
where $\en(\dstate ; \ham) := \Tr[\dstate\ham]$ is the average energy of a quantum state $\dstate$ and $\mathbb{U}(d)$ is the $d$-dimensional representation of the unitary group (see App.~\ref{sec.rev}  for a concise review on the main properties of ergotropy). 
 Accordingly the ergotropic capacitance of the model corresponds to  
 \begin{eqnarray} 
C_{\ergo}\left( \Lambda, \mathfrak{e} \right) := \lim_{n\rightarrow \infty} 
\frac{\ergo^{(n)}(\Lambda;E=n\mathfrak{e})}{n} \;, \label{cergo} 
\end{eqnarray} 
where, given $\ham^{(n)}:=\hat{h}_1 + \cdots + \hat{h}_n$  the QB Hamiltonian, we set~\cite{PhysRevLett.127.210601} 
  \begin{eqnarray} \label{fixedergo}
 \fixedergo^{(n)}(\Lambda;E)&:=& \max_{\dstate^{(n)} \in \overline{\mathfrak{S}}^{(n)}_{E}}
 \ergo(\Lambda^{\otimes n}(\dstate^{(n)});\ham^{(n)}) \;,  \\ \label{dergo}
 \ergo^{(n)}(\Lambda;E)&:=& \max_{0 \leq E'\leq E} \fixedergo^{(n)}(\Lambda;E')  \nonumber \\
 &=&  \max_{\dstate^{(n)} \in {\mathfrak{S}}^{(n)}_{E}}
 \ergo(\Lambda^{\otimes n}(\dstate^{(n)});\ham^{(n)})\;,
 \end{eqnarray} 
the maximization for~(\ref{fixedergo}) being performed over all possible input states 
$\dstate^{(n)}\in\overline{\mathfrak{S}}^{(n)}_{E}$ of the $n$ cells whose initial energy is exactly equal to a fixed value $E$, and for~(\ref{dergo}) over all possible input states $\dstate^{(n)}\in{\mathfrak{S}}^{(n)}_{E}$ of the $n$ cells whose initial energy is  not larger than to $E$. 
Restricted versions of $C_{\ergo}\left( \Lambda; \mathfrak{e} \right)$ can be obtained by 
constraining the allowed operations one can perform on the battery before and/or after the action of the noise.
For instance, assuming the maximization in Eq.~(\ref{fixedergo}) only to run on separable input states of the q-cells will lead us to the separable-input ergotropic capacitance 
$C_{\rm sep} \left( \Lambda; \mathfrak{e} \right)$ which represents the asymptotic work we can extract per q-cell in the absence of initial entanglement between cells.
 Similarly, restricting the optimization in Eq.~(\ref{ergoeeeDEFinvariance}) to include only unitary operations acting locally on the q-cells (i.e. replacing $\ergo(\dstate;\ham)$ with the local ergotropy~(\ref{ergoeeeDEFinvarianceloc}) \cite{bipartiteworkextraction}) leads to the local ergotropic capacitance 
 $C_{\rm loc} \left( \Lambda; \mathfrak{e} \right)$ that gives the maximum work extraction rate attainable with only local resources at the decoding stage.
Assuming finally the optimizations to be restricted to both separable states and to local unitary operations one can define 
the separable-input, local ergotropic capacitance 
 $C_{\rm loc, sep} \left( \Lambda; \mathfrak{e} \right)$. 
 While clearly $C_{\cal E} \left( \Lambda; \mathfrak{e} \right)$ and 
 $C_{\rm loc, sep} \left( \Lambda; \mathfrak{e} \right)$ represent respectively the largest and lowest work capacitances of the model, an absolute ordering can be established between all  these quantities, i.e. 
\begin{eqnarray} \label{ordering} 
C_{\cal E} \left( \Lambda; \mathfrak{e} \right) \geq 
C_{\rm sep} \left( \Lambda; \mathfrak{e} \right) \geq 
C_{\rm loc} \left( \Lambda; \mathfrak{e} \right)= C_{\rm loc, sep} \left( \Lambda; \mathfrak{e} \right)\;. \end{eqnarray}
The last two relations follow from arguments in Ref. \cite{tirone2023work}, where a simplified closed formula for $C_{\rm loc} \left( \Lambda; \mathfrak{e} \right)$ and $C_{\rm loc, sep} \left( \Lambda, \mathfrak{e} \right)$ has been derived. Specifically it is proven that, irrespective of the specifics of the QB model, the encoding/decoding optimizations that express
the local and seperable-input local capacitances can be solved, leading to the identity 
 \begin{eqnarray} \label{exact} 
C_{\rm loc} \left( \Lambda; \mathfrak{e} \right)&=&
C_{\rm loc, sep} \left( \Lambda, \mathfrak{e} \right)= \chi(\Lambda; \mathfrak{e}) \;, \end{eqnarray}
where we defined the quantity $\chi(\Lambda;\mathfrak{e})$ as: 
\begin{eqnarray} \label{defchi} 
\chi(\Lambda; \mathfrak{e}) &:=& 
\sup_{\{ p_j, \mathfrak{e}_j\}} \sum_j p_j\;  \ergo^{(1)}(\Lambda;\mathfrak{e}_j)\;, 
\end{eqnarray} 
the supremum being taken over all the distributions $\{ p_j, \mathfrak{e}_j \}$ of input q-cell energy $\mathfrak{e}_j\in [0,1]$ 
fulfilling the constraint
\begin{eqnarray}\label{energyconstraint} 
\sum_j p_j  \mathfrak{e}_j\leq	 \mathfrak{e} \;. 
\end{eqnarray} 
Unfortunately, at present, no analog of Eq.~(\ref{exact}) are known for $C_{\cal E} \left( \Lambda, \mathfrak{e} \right)$ and
$C_{\rm sep} \left( \Lambda, \mathfrak{e} \right)$. Yet a nontrivial lower bound for $C_{\rm sep} \left( \Lambda, \mathfrak{e} \right)$
can be obtained by replacing in~(\ref{defchi}) the ergotropy with its $n$-fold regularized counterpart, i.e. the  total-ergotropy functional~(\ref{totergoeeeDEFtotal})~\cite{Alicki2013,Niedenzu2019}, i.e. 
 \begin{eqnarray} \label{lower} 
C_{\rm sep} \left( \Lambda; \mathfrak{e} \right)&\geq& \chi_{\rm tot}(\Lambda; \mathfrak{e}) \;, 
\end{eqnarray} 
with 
\begin{eqnarray} 
\chi_{\rm tot}(\Lambda; \mathfrak{e}) &:=&  \label{defchitot} 
\sup_{\{ p_j, \mathfrak{e}_j\}} \sum_j p_j\;  \ergo_{\rm tot}^{(1)}(\Lambda;\mathfrak{e}_j)\;. 
\end{eqnarray} 
Furthermore, according to Corollary 2 of Ref. \cite{tirone2023work}, the gap in  Eq.~(\ref{defchitot}) closes at least for all those noise models $\Lambda$ which, given $\mathfrak{e}\in [0,1]$, admit a single-site state $\hat{\sigma}_{\mathfrak{e}}$ of mean energy not larger than $\mathfrak{e}$ such that  
\begin{equation}\label{step1} 
\ergo((\Lambda(\hat{\sigma}_{\mathfrak{e}})) ^{\otimes n};\ham^{(n)}) \geq \ergo(\Lambda^{\otimes n}(|\Psi_{\rm fact}^{(n)}\rangle\!\langle \Psi_{\rm fact}^{(n)}|);\ham^{(n)})\;, 
\end{equation} 
for all 
\begin{eqnarray}\label{defPSIfac} 
|\Psi_{\rm fact}^{(n)}\rangle := |\psi_1\rangle\otimes |\psi_2\rangle \otimes \cdots \otimes |\psi_n\rangle\;, \end{eqnarray} 
pure, factorized state of the q-cells 
with mean energy smaller than or equal to $E=n\mathfrak{e}$.

\subsection{Maximal asymptotic Work/Energy ratios} \label{sec:MAWER} 

While the quantum work capacitances are defined for a finite ratio of the input energy $E$ with respect to the number $n$ of q-cells (i.e., $E/n$ is fixed), MAWERs are computed instead assuming $n$ to be a free resource, independent of $E$. 
Formally, computing the MAWER corresponding to 
$C_{\ergo}\left( \Lambda, \mathfrak{e} \right)$  
accounts to solving the following limit: 
\begin{eqnarray} \label{MAWERDEF} 
\multiergo(\Lambda) := \limsup_{E\rightarrow\infty}\left(\sup_{n\geq \lceil E/\mathfrak{e}_{\max} \rceil}\frac{   \ergo^{(n)}(\Lambda;E)}{E}\right) \;,
\end{eqnarray}
with $\mathfrak{e}_{\max}$ being the maximum eigenvalue of the single-site Hamiltonian $\hat{h}$. The definition in Eq. (\ref{MAWERDEF}) can be proved to be equivalent to the following formula:
\begin{equation} \label{eq:truemaw}
\mathcal{J}_{\ergo}(\Lambda) = \lim_{E\to\infty}\lim_{n\to\infty}\frac{\ergo^{(n)}(\Lambda;E)}{E} \; .
\end{equation}
This latter characterization gives a physical meaning to the quantity above: it shows that in principle this measure of efficiency can be practically achieved by using a large number of q-cells. The proof of Eq. (\ref{eq:truemaw}) can be found in appendix \ref{app:physmaw}.
Analogously to what was done for the quantum work capacitances, also for the MAWER one can introduce a hierarchy of constrained versions of such quantity differing on the resources dedicated to the work extraction process. Specifically, 
replacing $\ergo^{(n)}(\Lambda;E)$ appearing in the r.h.s. of Eq.~(\ref{MAWERDEF}) with 
$\ergo_{\rm loc}^{(n)}(\Lambda;E)$, $\ergo_{\rm sep}^{(n)}(\Lambda;E)$ and $\ergo_{\rm loc, sep}^{(n)}(\Lambda;E)$, we can speak of 
local MAWER ${\cal J}_{\rm loc}(\Lambda)$, separable-input MAWER ${\cal J}_{\rm sep}(\Lambda)$, and separable-input local MAWER ${\cal J}_{\rm loc, sep}(\Lambda)$, respectively. For all of the constrained versions of the MAWER defined above holds a characterization analogous to Eq. (\ref{eq:truemaw}).

We now prove an important relation allowing us to establish a direct connection between each one of such functionals and their associated quantum work-capacitance:
\begin{theorem}\label{theo1} For any given CPTP map $\Lambda$, the following identity holds
\begin{eqnarray}
{\cal J}_{\diamond}(\Lambda) = \lim_{\mathfrak{e}\rightarrow 0} \frac{C_{\diamond}(\Lambda;\mathfrak{e})}{\mathfrak{e}}\;, \label{JvsQW} 
\end{eqnarray} 
where hereafter we will use the symbol $\diamond$ as a placeholder variable taking values on the set $\{ {\cal E}; {\rm sep}; {\rm loc}; {\rm loc, sep}\}$.
\end{theorem}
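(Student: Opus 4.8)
The plan is to reduce both sides of~\eqref{JvsQW} to a single object, the asymptotic slope of the ``infinite-cell'' extractable work, and to exploit two complementary monotonicity/superadditivity properties of the underlying optimization. Throughout I write $h_\diamond(n,E):=\ergo^{(n)}_\diamond(\Lambda;E)$ for the constrained functional labelled by $\diamond$, and I set $F_\diamond(E):=\lim_{n\to\infty}h_\diamond(n,E)$ together with $\psi_\diamond(n,E):=h_\diamond(n,E)/E$. The first move is to rewrite the two sides purely in terms of these quantities: for the left-hand side the definition~\eqref{cergo} gives $C_\diamond(\Lambda;\mathfrak{e})/\mathfrak{e}=\lim_{n\to\infty}\psi_\diamond(n,n\mathfrak{e})$, while for the right-hand side I invoke the equivalent double-limit characterization~\eqref{eq:truemaw} (valid for every $\diamond$, as stated after it), so that ${\cal J}_\diamond(\Lambda)=\lim_{E\to\infty}\lim_{n\to\infty}\psi_\diamond(n,E)$.

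The second step establishes the two structural properties that drive the argument. \emph{Monotonicity in $n$:} padding an optimal $n$-cell input with one extra cell in its ground state produces, after the noise, the state $\Lambda^{\otimes n}(\dstate^{(n)})\otimes\Lambda(\ket{0}\!\bra{0})$; since ergotropy is superadditive under tensor products (and its local/separable variants are in fact additive on product marginals) and nonnegative, one obtains $h_\diamond(n+1,E)\ge h_\diamond(n,E)$, so $F_\diamond(E)=\sup_n h_\diamond(n,E)$ is a genuine monotone limit and $\lim_{n\to\infty}\psi_\diamond(n,E)=F_\diamond(E)/E$. \emph{Block superadditivity:} tensoring an optimizer for $(n_1,E_1)$ with one for $(n_2,E_2)$ yields a feasible input for $(n_1+n_2,E_1+E_2)$ whose output ergotropy is at least the sum, whence $h_\diamond(n_1+n_2,E_1+E_2)\ge h_\diamond(n_1,E_1)+h_\diamond(n_2,E_2)$. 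Dividing this by the number of cells and letting it diverge along a rational partition $\lambda=a/(a+b)$ of the energy shows that $\mathfrak{e}\mapsto C_\diamond(\Lambda;\mathfrak{e})$ is concave with $C_\diamond(\Lambda;0)\ge 0$; likewise $F_\diamond$ is superadditive with $F_\diamond(0)\ge 0$.

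The third step evaluates the two sides. For the right-hand side, superadditivity of $F_\diamond$ together with the (super-additive) Fekete lemma yields ${\cal J}_\diamond(\Lambda)=\lim_{E\to\infty}F_\diamond(E)/E=\sup_{E>0}F_\diamond(E)/E$. For the left-hand side, concavity of $C_\diamond$ with $C_\diamond(\Lambda;0)\ge 0$ makes $\mathfrak{e}\mapsto C_\diamond(\Lambda;\mathfrak{e})/\mathfrak{e}$ non-increasing, so the limit in~\eqref{JvsQW} equals $\sup_{\mathfrak{e}>0}C_\diamond(\Lambda;\mathfrak{e})/\mathfrak{e}$; expanding $C_\diamond(\Lambda;\mathfrak{e})=\sup_n h_\diamond(n,n\mathfrak{e})/n$ and substituting $E=n\mathfrak{e}$ (over the admissible region $E\le n\,\mathfrak{e}_{\max}$) turns this into $\sup_{n,E}\psi_\diamond(n,E)=\sup_{E>0}F_\diamond(E)/E$. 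The two sides thus coincide, establishing~\eqref{JvsQW}.

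I expect the main obstacle to be the careful verification of the two structural properties \emph{uniformly} over the four labels $\diamond\in\{\ergo;\mathrm{sep};\mathrm{loc};\mathrm{loc,sep}\}$: ergotropy is only superadditive (not additive) in the global case, so the padding and tensoring constructions must be checked to respect the locality and separability constraints on encoder and decoder, and one must confirm that the relevant tensor products remain inside the admissible input class. A secondary technical point is finiteness: one has to rule out, or simply absorb, the degenerate situation in which $\Lambda(\ket{0}\!\bra{0})$ is itself non-passive, for then $F_\diamond(E)\equiv+\infty$ and both sides of~\eqref{JvsQW} equal $+\infty$ trivially; away from this case $F_\diamond$ is finite, the Fekete lemma applies, and the interchange of the two suprema is justified by the monotonicities of the second step.
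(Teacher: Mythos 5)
Your argument is correct and reaches the same conclusion, but it is organized quite differently from the paper's proof. The paper runs a direct sandwich: the lower bound ${\cal J}_{\diamond}(\Lambda)\geq C_{\diamond}(\Lambda;\mathfrak{e})/\mathfrak{e}$ is obtained for every fixed $\mathfrak{e}$ by restricting the supremum over $n$ in Eq.~(\ref{MAWERDEF}) to $n=\lceil E/\mathfrak{e}\rceil$ and using monotonicity of $\ergo_{\diamond}^{(n)}(\Lambda;E)$ in $n$ and $E$; the matching upper bound follows from $\ergo_{\diamond}^{(n)}(\Lambda;E)\leq n\,C_{\diamond}(\Lambda;E/n)$ together with the observation that $E/n\to 0$ as $n\to\infty$. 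You instead identify both sides of Eq.~(\ref{JvsQW}) with the common quantity $\sup_{n,E}\ergo_{\diamond}^{(n)}(\Lambda;E)/E$, deriving along the way the concavity of $\mathfrak{e}\mapsto C_{\diamond}(\Lambda;\mathfrak{e})$ (hence the monotonicity of $C_{\diamond}(\Lambda;\mathfrak{e})/\mathfrak{e}$) and the Fekete limit for $F_{\diamond}(E)/E$. The underlying ingredients are the same --- monotonicity in $n$ and $E$, and block superadditivity, which is precisely what gives the paper's inequality $\ergo_{\diamond}^{(n)}(\Lambda;n\mathfrak{e})\leq n\,C_{\diamond}(\Lambda;\mathfrak{e})$ --- but your route makes explicit two structural facts the paper leaves implicit: $C_{\diamond}$ is concave, and the limit in Eq.~(\ref{JvsQW}) is in fact a supremum, so that ${\cal J}_{\diamond}(\Lambda)=\sup_{\mathfrak{e}}C_{\diamond}(\Lambda;\mathfrak{e})/\mathfrak{e}$.

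Two caveats. First, you should not invoke Eq.~(\ref{eq:truemaw}) here: its proof in Appendix~\ref{app:physmaw} already uses the identity $\sup_{\mathfrak{e}}C_{\diamond}(\Lambda;\mathfrak{e})/\mathfrak{e}={\cal J}_{\diamond}(\Lambda)$, which is part of what Theorem~\ref{theo1} establishes, so the citation is circular as a matter of logical order. It is also unnecessary: monotonicity in $n$ alone turns the inner supremum in Eq.~(\ref{MAWERDEF}) into $\lim_{n}$, giving ${\cal J}_{\diamond}(\Lambda)=\limsup_{E}F_{\diamond}(E)/E$ directly from the definition, and your Fekete step then upgrades the $\limsup$ to a genuine limit. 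Second, the block-superadditivity and ground-state-padding constructions do require the case-by-case check you flag (products of separable inputs remain separable; $\ergo_{\rm loc}$ is additive on product outputs so the tensoring respects the local-decoder constraint; the padding cell carries zero energy so feasibility is preserved), and the rational midpoint concavity must be extended to full concavity via monotonicity of $C_{\diamond}$; all four labels pass these checks, so these are matters of writing out details rather than gaps.
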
 
\begin{proof}
Let us start by noticing that ${\cal J}_{\diamond}(\Lambda)$ can be always lower bounded as
 \begin{equation} \label{MAWERDEFineqtrivialnew1} 
{\cal J}_{\diamond}(\Lambda) \geq \sup_{\mathfrak{e}\in (0,\mathfrak{e}_{\max}]} \frac{C_{\diamond}(\Lambda;\mathfrak{e})}{\mathfrak{e}}\;. 
\end{equation}
Indeed since for $\mathfrak{e}\in (0,\mathfrak{e}_{\max}]$ one has 
 $\lceil E/\mathfrak{e}\rceil \geq \lceil E/\mathfrak{e}_{\max}\rceil$, we can write 
 \begin{eqnarray} \label{MAWERDEFineq} 
{\cal J}_{\diamond}(\Lambda) &\geq&  \limsup_{E\rightarrow\infty}\frac{   \ergo_{\diamond}^{(\lceil E/\mathfrak{e}\rceil)}(\Lambda;E)}{E} \nonumber \\
&\geq& \limsup_{E\rightarrow\infty}\frac{\ergo_{\diamond}^{(\lceil E/\mathfrak{e}\rceil-1)}(\Lambda;(\lceil E/\mathfrak{e}\rceil -1)\mathfrak{e})}{E} \nonumber \\ 
&=&  \limsup_{E\rightarrow\infty}\nonumber
\left(\tfrac{\lceil E/\mathfrak{e}\rceil -1}{E} \right)
\frac{\ergo_{\diamond}^{(\lceil E/\mathfrak{e}\rceil-1)}(\Lambda;(\lceil E/\mathfrak{e}\rceil -1)\mathfrak{e})}{\lceil E/\mathfrak{e}\rceil -1}
\\&=& \frac{C_{\diamond}(\Lambda;\mathfrak{e})}{\mathfrak{e}}\;,
\end{eqnarray}
where the second inequality follows from 
$E\geq (\lceil E/\mathfrak{e}\rceil -1)\mathfrak{e}$ and from the fact that 
$\ergo_{\diamond}^{(n)}(\Lambda;E)$ is monotonically increasing 
w.r.t. to $E$ and $n$~\cite{quantumworkcapacitances}; the final identity is obtained by direct computation of the $E\rightarrow \infty$ limit. 
Evaluated for $\mathfrak{e}\rightarrow 0$, Eq.~(\ref{MAWERDEFineqtrivialnew1}) implies in particular that the r.h.s. of Eq.~(\ref{JvsQW}) is also a lower bound for ${\cal J}_{\diamond}(\Lambda)$, i.e.
\begin{eqnarray}{\cal J}_{\diamond}(\Lambda) \geq \lim_{\mathfrak{e}\rightarrow 0} \frac{C_{\diamond}(\Lambda;\mathfrak{e})}{\mathfrak{e}} \; . 
\end{eqnarray} 
Accordingly to prove the thesis we only need to show that also the reverse inequality holds true.
For this purpose observe that, for $E\geq 0$ and $n\geq \lceil E/ \mathfrak{e}_{\max}\rceil$ integer, we can write
\begin{eqnarray}
\ergo_{\diamond}^{(n)}(\Lambda;E) =\ergo_{\diamond}^{(n)}(\Lambda;n E/n)\leq
n C_{\diamond}(\Lambda; E/n) \;, 
\end{eqnarray} 
where in the final passage we used the fact that 
$\ergo_{\diamond}^{(n)}(\Lambda;n \mathfrak{e}) \leq n C_{\diamond}(\Lambda; \mathfrak{e})$
for all $\mathfrak{e}\in [0,\mathfrak{e}_{\max}]$ and $n$ integer. 
Since $\ergo^{(n)}(\Lambda;E)$ is monotonically non-decreasing in $n$ for any $E$ as shown in \cite{quantumworkcapacitances} we obtain
 \begin{eqnarray} \mathcal{J}_{\diamond}(\Lambda) &\leq&  \limsup_{E\rightarrow\infty}\left(\lim_{n\rightarrow \infty}\frac{   C_{\diamond}(\Lambda; E/n)}{E/n}\right) \nonumber \\
\label{MAWERDEFgeq}  
 &=&  \lim_{\mathfrak{e}\rightarrow 0} \frac{C_{\diamond}(\Lambda;\mathfrak{e})}{\mathfrak{e}}\;.
 \end{eqnarray}
\end{proof} 

\begin{remark} For channels $\Lambda$ such that $C_{\diamond}(\Lambda,\mathfrak{e})$ admits a nonzero value for $\mathfrak{e}=0$, Eq.~(\ref{JvsQW}) implies that the associated MAWER diverges, i.e. 
\begin{eqnarray}
C_{\diamond}(\Lambda;0)\neq 0 \qquad \Longrightarrow \qquad 
 {\cal J}_{\diamond}(\Lambda) = \infty \;. \label{JvsQWinfity} 
\end{eqnarray} 
For the ergotropic and separable-input ergotropic capacitances (i.e. for $\diamond = {\cal E}$ and ${\rm sep}$) this happens when $\Lambda$ maps
the ground state $|0\rangle$ into an output state which is not completely passive.
For the local ergotropic capacitances (i.e. $\diamond = {\rm loc}$ and ${\rm loc, sep}$), Eq.~(\ref{JvsQWinfity}) occurs instead for all those maps for which $\Lambda(|0\rangle\!\langle 0|)$ is a non passive configuration.
\end{remark} 
In~\cite{quantumworkcapacitances} is proven that the depolarizing channel is an example of channel whose MAWER diverges.

\begin{remark} \label{remark2} For channels $\Lambda$ such that the function $C_{\diamond}(\Lambda,\mathfrak{e})$ is differentiable in $\mathfrak{e}=0$ and there assumes zero value, Eq.~(\ref{JvsQW}) can be expressed as
\begin{eqnarray}
{\cal J}_{\diamond}(\Lambda) = \left. \frac{\partial C_{\diamond}(\Lambda,\mathfrak{e})}{\partial \mathfrak{e}}\right|_{\mathfrak{e}=0} \;. \label{JvsQWderi} 
\end{eqnarray}
This also implies that the work capacitance admits the following linear expansion for $\mathfrak{e}\ll \mathfrak{e}_{\max}$, i.e. 
\begin{eqnarray}
{C}_{\diamond}(\Lambda,\mathfrak{e}) = {\cal J}_{\diamond}(\Lambda) \mathfrak{e} +
O(({\mathfrak{e}}/{\mathfrak{e}_{\max}})^2)\;. \label{JvsQWderiinv} 
\end{eqnarray}
\end{remark} 
\begin{remark} In view of the identity (\ref{exact}), Theorem~\ref{theo1} implies that 
the local MAWER always coincides with the separable-input local MAWER. In particular 
they can be expressed as 
\begin{eqnarray} \label{exactMAW} 
{\cal J}_{\rm loc} \left( \Lambda \right)&=&
{\cal J}_{\rm loc, sep} \left( \Lambda \right)= \chi'(\Lambda)\;,
\end{eqnarray}
with 
\begin{eqnarray}\chi'(\Lambda)&:=& 
 \lim_{\mathfrak{e}\rightarrow 0} \frac{\chi(\Lambda,\mathfrak{e})}{\mathfrak{e}}
=\left.\frac{\partial{\chi (\Lambda,  \mathfrak{e})}}{\partial  \mathfrak{e}}\right|_{\mathfrak{e} =0}\;,  \label{simple} 
\end{eqnarray} 
the second identity holding whenever the function $\chi (\Lambda,  \mathfrak{e})$ of Eq.~(\ref{defchi}) is differentiable in $\mathfrak{e}=0$ and fulfils the condition~$\chi (\Lambda, 0)=0$.
Similarly, from Eq.~(\ref{lower}) we can derive a lower bound for the separable-input MAWER 
 \begin{eqnarray} \label{ineqtot} 
{\cal J}_{\rm sep}(\Lambda)\geq \chi'_{\rm tot}(\Lambda)\;,
\end{eqnarray} 
with 
\begin{eqnarray}
\chi'_{\rm tot}(\Lambda)&:=& \lim_{\mathfrak{e}\rightarrow 0} \frac{\chi_{\rm tot} (\Lambda,\mathfrak{e})}{\mathfrak{e}}
=\left.\frac{\partial{\chi_{\rm tot} (\Lambda,  \mathfrak{e})}}{\partial  \mathfrak{e}}\right|_{\mathfrak{e} =0}\;, 
\end{eqnarray} 
the second identity holding whenever $\chi_{\rm tot} (\Lambda,  \mathfrak{e})$ of~(\ref{defchitot}) is differentiable in $\mathfrak{e}=0$ and fulfils the condition~$\chi_{\rm tot}  (\Lambda, 0)=0$.
\end{remark}

\section{Multi-Qubit noisy battery models}
\label{sec:canali}
In this section we focus on QB models made of identical two-levels (qubit) q-cells characterized by local Hamiltonians $\hat{h}$ whose maximum eigenvalue $\mathfrak{e}_{\max}$ is, without loss of generality, fixed equal to $1$, i.e. 
\begin{eqnarray}
\hat{h} = \ket{1}\!\!\bra{1}\;,
\end{eqnarray}
 ($|0\rangle$ being the zero energy ground state).  For such setting we evaluate
 the work capacitances and the MAWERs of three noise models:
 \begin{itemize}
 \item  Sec. \ref{sec:adc}: the Amplitude Damping Channel (ADC) $\adc$~\cite{nielsen_chuang_2010}, a CPTP map that can be used to model self-discharging effects (i.e. the decay of population from the excited level $|1\rangle$ 
 to the ground state $|0\rangle$) in systems interacting with a zero temperature or pure state environment. It can be applied for instance to model the $T_1$ decoherence time associated to, among the prevailing platforms in quantum technologies, superconducting qubits \cite{papic2023error}, solid states qubits \cite{chirolli2008}, ion trap qubits \cite{ion_ADC} and it has been shown to be a relevant process to be addressed in experimental realizations of quantum batteries \cite{Quach2022, Hu_2022}.
 \item Sec. \ref{sec:GADC}: the Generalized Amplitude Damping Channel (GADC)  $\gadc$ \cite{qubitGADC}, that describes  partial thermalization  of the q-cells placed in contact with an external bath in a thermal state. This thermalization model can describe the behaviour of superconducting qubits \cite{papic2023error}, solid state qubits \cite{chirolli2008}, trapped atoms qubits \cite{Myatt2000, atom_GADC} in a thermal environment and, specifically, superconducting-based instances of experimental quantum batteries \cite{Hu_2022}.
 \item Sec. \ref{sec:metaphysics}: we analyze the channel resulting from the composition of ADCs with extra dephasing noise (an analysis on the dephasing channel alone was performed in~\cite{quantumworkcapacitances}). The simple model of dephasing characterizes the decoherence process in a qubit and is ubiquitous across all practical quantum technologies architectures. Specifically, the simultaneous presence of dephasing and amplitude damping effectively describes the behaviour of superconducting qubits \cite{papic2023error} and it's proved to occur in experimental quantum batteries \cite{Quach2022}. 
\end{itemize}

\subsection{Amplitude Damping Channel}
\label{sec:adc}
Expressed in the energy basis $\{ |0\rangle, |1\rangle\}$, the action of the amplitude damping channel $\adc$ on a generic density matrix $\dstate$ of a qubit system is defined as \cite{GiovannettiFazio2005}:
\begin{equation}\label{eq: ADC matrix}
\adc(\dstate)=\begin{pmatrix}
\rho_{00} + \gamma \rho_{11} & \sqrt{1-\gamma}\rho_{01} \\
\sqrt{1-\gamma}\rho_{10} & (1-\gamma)\rho_{11}
\end{pmatrix} \;,
\end{equation}
with $\gamma\in [0,1]$ being the damping parameter of the model, and $\rho_{ij}$ being the matrix elements of $\dstate$.
We recall that $\adc$ commutes with the Hamiltonian evolution of the system, i.e.
\begin{eqnarray}\label{covariance} 
\adc(e^{-i\hat{h} t}\dstate e^{i\hat{h} t}) = e^{-i\hat{h} t}\adc(\dstate)e^{i\hat{h} t}\;,
\end{eqnarray}
 and that the output energy and input
energy of any state are connected via the identity 
\begin{equation} \label{eq:adcenergylim}
\en(\adc(\dstate)) = (1-\gamma)\en(\dstate) \; .
\end{equation}
Notice also that $\adc$ maps the ground state into itself, i.e. 
 \begin{eqnarray} \label{Tzero} 
  \adc(\ket{0}\!\!\bra{0})= \ket{0}\!\!\bra{0}\;. 
 \end{eqnarray} 
As proven in \cite{PhysRevLett.127.210601}, the output ergotropy $\ergo(\Lambda(\dstate); \hat{h})$ of a generic quantum 
channel $\Lambda$ is always maximized by a pure state. Therefore, to compute the single-use fixed-energy maximum output 
ergotropy $\fixedergo^{(1)}(\adc; E=\mathfrak{e})$, it is sufficient to consider the sets of pure states with energy $\mathfrak{e}\in [0,1]$, 
which can be parameterized as
\begin{eqnarray}
\ket{\psi_{\mathfrak{e}}^{(\phi)}} := \sqrt{1-{\mathfrak{e}}}\ket{0} + e^{i\phi}\sqrt{{\mathfrak{e}}}\ket{1} \;,
\label{stato_energiaE}
\end{eqnarray}
with the phase $\phi$ that can be taken equal to zero thanks to the covariance property~(\ref{covariance})~\cite{quantumworkcapacitances}.
 By direct computation one can observe that $\adc(\ket{\psi_{\mathfrak{e}}^{(\phi)}}\!\!\bra{\psi_{\mathfrak{e}}^{(\phi)}})$ admits as eigenvalues 
 \begin{equation}
	\lambda^{(\pm)}_{\gamma}({\mathfrak{e}}):= \frac{1}{2} \left(1 \pm \sqrt{1-4 {\mathfrak{e}}^2\gamma(1-\gamma)} \right)\; ,
	\end{equation}
which, using~(\ref{ergoqubit}), leads to the following formula for the the single-shot ($n=1$), fixed energy, maximum output ergotropy of the model,
\begin{equation}
\fixedergo^{(1)}(\adc;\mathfrak{e}) = (1-\gamma){\mathfrak{e}} - \frac{1}{2}\left(1- \sqrt{1-4{\mathfrak{e}}^2\gamma(1-\gamma)}\right) \; .
\label{fixedergo_adc}
\end{equation}

As shown in Fig.~\ref{fig: plot2}, for fixed $\gamma$, this is a concave and in general non-monotonic function of the energy ${\mathfrak{e}}$. In the interval $[0,1]$, it reaches its maximum value for
 	\begin{eqnarray}
	E_{\gamma} := \min\left\{ 1, 1/({2 \sqrt{ \gamma}})\right\}\;, 
	 \end{eqnarray}
	  meaning that using all the available input energy is not always the best option to get a configuration that is more resilient to the noise.  
An inspection of Eq.~(\ref{fixedergo_adc}) also reveals that the optimal choice of the input energy is given by 
\begin{eqnarray}\label{MAX4} 
{E}^{\star}_{\gamma}({\mathfrak{e}}):= \min\{ {\mathfrak{e}}, E_{\gamma}\} = \min\{ {\mathfrak{e}}, 1/(2\sqrt{\gamma})\}\;,
\end{eqnarray}  which yields
\begin{equation} \label{eq: ADC erg_11_new}
\ergo^{(1)}(\adc;\mathfrak{e})=
\fixedergo^{(1)}(\adc;{E}^{\star}_{\gamma}({\mathfrak{e}})) \;, \end{equation}
that is also concave w.r.t. ${\mathfrak{e}}$ -- see Fig.~\ref{fig: plot2}. When replaced in Eqs.~(\ref{exact}) and~(\ref{defchi}) we can thus arrive to the following expression for the local ergotropic capacitances, 
\begin{equation} \label{defchiADC} 
C_{\rm loc} \left( \adc; \mathfrak{e} \right) = C_{\rm loc, sep} \left( \adc; \mathfrak{e} \right)= \chi(\adc; \mathfrak{e}) = 
  \ergo^{(1)}(\adc;\mathfrak{e})\;,
\end{equation} 
which we plot in Fig.~\ref{fig: plot2} as functions of $\gamma$ for fixed $\mathfrak{e}$.
Notably for the ADC $\adc$ the quantity $\chi(\adc; \mathfrak{e})$ also provides the value of the separable-input ergotropy capacitance, i.e.
\begin{eqnarray} \label{csepadc} 
 C_{\rm sep} \left( \adc; \mathfrak{e} \right)= \chi(\adc; \mathfrak{e})\;.
\end{eqnarray} 
The proof of this important identity is postponed to Sec.~\ref{sec:GADC}, where it will be derived for the larger class of GADC's which includes the $\adc$'s as special instances.

We conclude by noticing that in view of the ordering~(\ref{ordering}),  $\chi(\adc, \mathfrak{e})$ represents a lower bound for 
 the last of the work capacitances introduced in Sec.~\ref{sec:QWC}, i.e. $C_{\cal E} \left( \adc, \mathfrak{e} \right)$.
 An upper bound for such term can be computed exploiting the fact that the mean output energy of a state is always larger than or equal to its ergotropy. Thus, from~(\ref{eq:adcenergylim}), we can write 
\begin{eqnarray} \label{ergonn}
 \ergo^{(n)}(\Phi_\gamma;E)&\leq &  (1-\gamma) E \;,  
 \end{eqnarray} 
 that leads to the inequality
 \begin{eqnarray} \label{csepadcbounds} 
(1-\gamma)  \mathfrak{e} \geq  C_{\cal E} \left( \adc; \mathfrak{e} \right)\geq  \chi(\adc; \mathfrak{e})\;. 
\end{eqnarray} 
Notice that Eq.~(\ref{ergonn}) 
determines the value of $C_{\cal E} \left( \adc; \mathfrak{e} \right)$ at least
 for small enough $\mathfrak{e}$: indeed in this regime 
  $\chi(\adc; \mathfrak{e})$ reduces to
$(1-\gamma)  \mathfrak{e}$, so we can write
  \begin{eqnarray} \label{csepadcbounds1} 
  C_{\cal E} \left( \adc; \mathfrak{e} \right) = (1-\gamma)  \mathfrak{e} + O(\mathfrak{e}^2) \;. 
\end{eqnarray}

\begin{figure*}[t!]
		\centering
		\includegraphics[width=\linewidth]{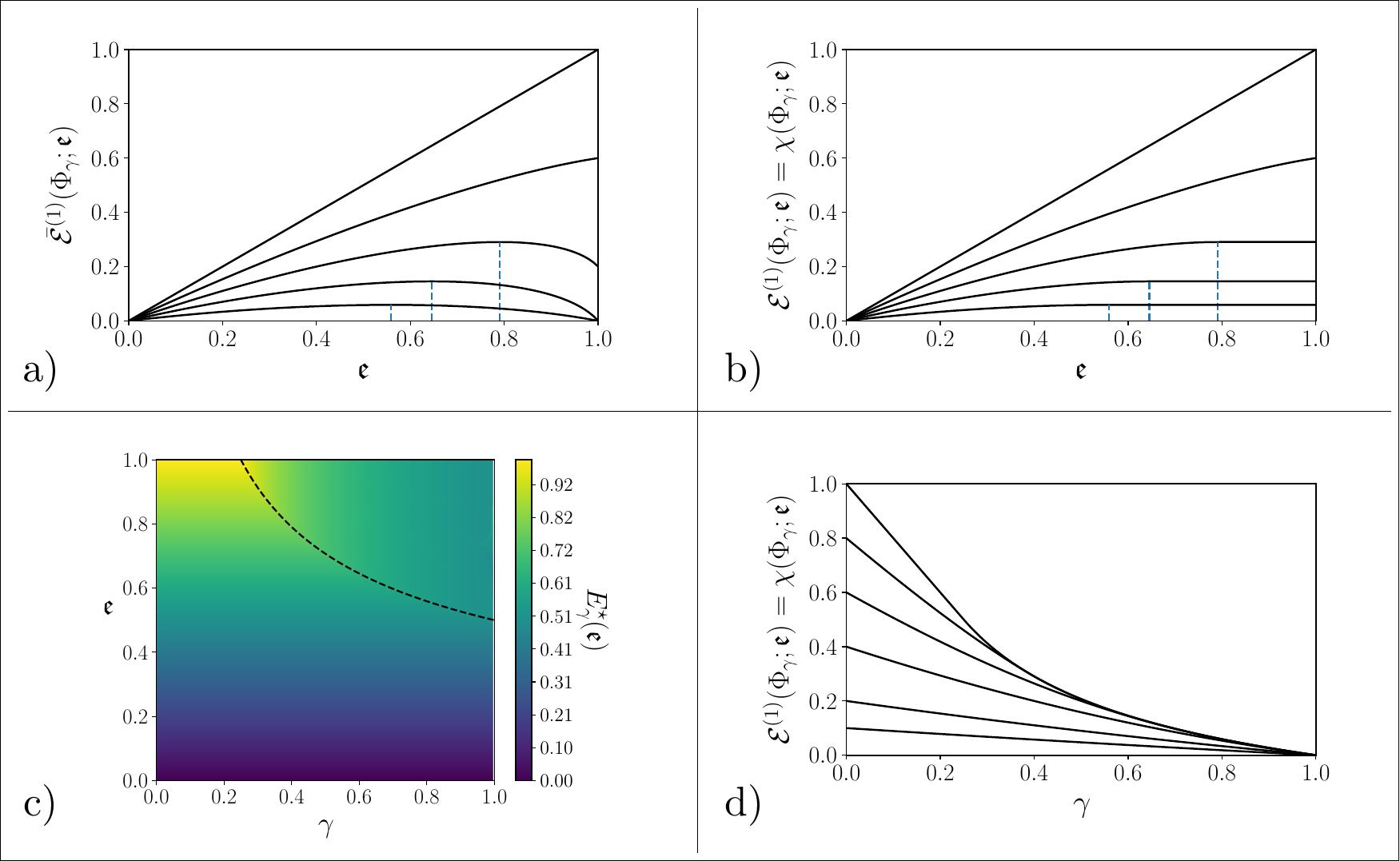}
		\caption{{\bf Panels  a)} and {\bf b)} show respectively the maximal single-shot, fixed energy, output ergotropy  $\fixedergo^{(1)}(\adc;\mathfrak{e})$ of
		Eq.~(\ref{fixedergo_adc}), and  $\ergo^{(1)}(\adc;\mathfrak{e})$ of Eq.~(\ref{eq: ADC erg_11_new}) for the ADC as a function of $\mathfrak{e}$ for different values of the noise parameter $\gamma$ (from top to bottom the values of $\gamma$ have been chosen to be equal to $0, 0.2, 0.4, 0.6, 0.8$).
		 Notice that both  $\fixedergo^{(1)}(\adc;\mathfrak{e})$ and 
		  $\ergo^{(1)}(\adc;\mathfrak{e})$ are concave w.r.t. to $\mathfrak{e}$, and that for $\mathfrak{e}\rightarrow 0$ their first derivatives are equal to $1-\gamma$ (see Eq.~(\ref{torna})). 
		  Recall also that $\ergo^{(1)}(\adc;\mathfrak{e})$ coincides with $\chi(\adc,\mathfrak{e})$, which according to Eq.~(\ref{defchiADC}) and (\ref{csepadc}) provides the value of the local and separable ergotropic capacitances of the channel.  
		  Dotted lines in plots indicate the values of $\mathfrak{e}$ where $\fixedergo^{(1)}(\adc;\mathfrak{e})$ reaches its maximum. 
		{\bf Panel  c)} Average input energy ${E}^{\star}_{\gamma}({\mathfrak{e}})$ of Eq.~(\ref{MAX4}) of the optimal state~(\ref{stato_energiaE})
	that allows $\fixedergo^{(1)}(\adc;\mathfrak{e})$ to be the maximum $\ergo^{(1)}(\adc;\mathfrak{e})$, the dotted line being the curve $\mathfrak{e} = 1/2\sqrt{\gamma}$.
	{\bf Panel d)}  $\ergo^{(1)}(\adc;\mathfrak{e})$ of Eq.~(\ref{eq: ADC erg_11_new}) as a function of the noise parameter $\gamma$ for different values of $\mathfrak{e}$ (from top to bottom $1,0.8,0.6,0.4,0.2,0.1$).}
		\label{fig: plot2}
\end{figure*}

Estimating $C_{\cal E} \left( \adc; \mathfrak{e} \right)$ for arbitrary values of $\mathfrak{e}$ remains however a challenging open problem. The superadditivity, due to entanglement and nonlocal energy extraction operations, that this quantity can in principle exhibit makes the task hard to tackle. Crucially, proving superadditivity would also certify a ``quantum advantage'' of the nonlocal strategies over the classical setting in presence of ADC noise.

 \subsubsection{Estimating MAWERs for ADCs}

 We can exactly compute the value of the MAWER functional $\multiergo(\adc)$ for ADC's. 
 To begin with, observe that from Eq. (\ref{ergonn}) it is easy verified that such quantity cannot be larger than $1-\gamma$, i.e. $\multiergo(\adc) \leq 1-\gamma$.
We will now show a family of states which achieves this upper bound, proving therefore that 
\begin{eqnarray}
\label{mawer_adc}
\multiergo(\adc) = 1-\gamma \; .
\end{eqnarray}
To this aim let's consider, for any fixed energy $E$ and arbitrary $n \geq E$, the vectors 
$\ket{\psi_{E/n}^{(0)}}^{\otimes n}$
 (see~(\ref{stato_energiaE})).
 Observe then that by invoking~(\ref{qubitsimpo}) we can write 
   \begin{eqnarray}\label{impo1} 
\tfrac{\ergo(\adc^{\otimes n}(\ket{\psi_{E/n}^{(0)}}\!\bra{\psi_{E/n}^{(0)}}^{\otimes n});\ham^{(n)})}{E} &=&\tfrac{n  
\ergo(\adc(\ket{\psi_{E/n}^{(0)}}\!\bra{\psi_{E/n}^{(0)}});\hat{h})}{E} \nonumber \\
&=& (1-\gamma) - O\left(\tfrac{E}{n}\right) \; ,
 \end{eqnarray}
 which, taking the sup over $n$, leads to Eq.~(\ref{mawer_adc}). 
We observe next that $1-\gamma$ provides also the value of the constrained versions of the MAWER, i.e. 
\begin{eqnarray}
\label{mawer_adcloc}
{\cal J}_{\rm sep} (\adc)={\cal J}_{\rm loc} (\adc)={\cal J}_{\rm loc, sep} (\adc) = 1-\gamma \; .
\end{eqnarray}
Indeed, while because of~(\ref{mawer_adc}),
$1-\gamma$ is trivially an upper bound for ${\cal J}_{\rm sep} (\adc)$, ${\cal J}_{\rm loc} (\adc)$, ${\cal J}_{\rm loc, sep} (\adc)$, such value can be attained by the smaller of them (i.e. ${\cal J}_{\rm loc, sep} (\adc)$), again using the separable state $\ket{\psi_{E/n}^{(0)}}^{\otimes n}$ as trial input.  
This follows by simply noticing that the identity in Eq.~(\ref{impo1}) would trivially apply also if we replaced 
$\ergo(\adc^{\otimes n}(\ket{\psi_{E/n}^{(0)}}\!\!\bra{\psi_{E/n}^{(0)}};\ham^{(n)})$ in the l.h.s. side with 
$\ergo_{\rm loc}(\adc^{\otimes n}(\ket{\psi_{E/n}^{(0)}}\!\!\bra{\psi_{E/n}^{(0)}};\ham^{(n)})$.
It goes without mentioning that, since the function~$\chi (\adc,  \mathfrak{e})$ verifies the conditions
$\chi (\adc;  \mathfrak{e}=0) =0$ and 
 \begin{eqnarray}\label{torna} 
 \left.\frac{\partial{\chi (\adc;  \mathfrak{e})}}{\partial  \mathfrak{e}}\right|_{\mathfrak{e} =0}&=&
  \left.\frac{\partial{\ergo^{(1)}(\adc;\mathfrak{e})}}{\partial  \mathfrak{e}}\right|_{\mathfrak{e} =0} =1-\gamma\;,
 \end{eqnarray}  
 for the local, separable-input local and separable MAWER, the identity (\ref{mawer_adcloc}) 
exactly matches with the prescription of Theorem~\ref{theo1}.
We also notice that~(\ref{csepadcbounds1}) can be seen as a consequence of Remark~\ref{remark2} and~(\ref{mawer_adc}).

 We conclude by observing that the vectors $\ket{\psi_{E/n}^{(0)}}^{\otimes n}$ are not the only ones that allow us to assign the value~$1-\gamma$ to $\multiergo(\adc)$. 
 Interestingly we can also employ more ``classical''  input states, that at variance with $\ket{\psi_{E/n}^{(0)}}^{\otimes n}$
 exhibit no quantum coherence among the energy eigenstates of the system Hamiltonian. Consider for instance what happens if we take tensor product states of the form 
 \begin{eqnarray} \label{classical} 
  |\phi_E^{(n)}\rangle:=  \ket{1}^{\otimes \lfloor 
 E\rfloor}\otimes\ket{0}^{n-\lfloor E \rfloor}\;,
 \end{eqnarray} 
 in which the first $\lfloor E \rfloor$ q-cells are initialized into the maximum energy eigenstate of $\hat{h}$ while the remaining ones are prepared in the ground state. 
In this case it is still possible to realize a work/energy ratio as high as the MAWER. Indeed  invoking Eq.~(\ref{Tzero}) we can write
 \begin{eqnarray}
&&\lim_{n\to\infty} \en\left(\ADCn (|\phi_E^{(n)}\rangle \!\langle \phi_E^{(n)}|); \ham^{(n)} \right) \nonumber \\
&&\qquad = \lfloor E \rfloor(1-\gamma) \nonumber \\
&&\qquad \geq \lim_{n \rightarrow \infty} \ergo\left(\ADCn (|\phi_E^{(n)}\rangle \!\langle \phi_E^{(n)}|); \ham^{(n)} \right)\nonumber \\ 
&&\qquad = \lim_{n \rightarrow \infty} \ergo\left(\left(\adc(\ket{1}\!\!\bra{1})\right)^{\otimes \lfloor E\rfloor} \otimes\ket{0}\!\!\bra{0}^{\otimes(n-\lfloor E \rfloor)}; \ham^{(n)} \right) \nonumber \\
&& \qquad \geq \en\left(\adc(\ket{1}\!\!\bra{1})^{\otimes \lfloor E \rfloor};\ham^{(\lfloor E \rfloor)}\right) - 1 \nonumber \\ 
&& \qquad = \lfloor E\rfloor \en\left(\adc(\ket{1}\!\!\bra{1}); \hat{h}\right) - 1 \nonumber \\
&& \qquad = \lfloor E\rfloor (1-\gamma) - 1 \; ,
 \end{eqnarray}
where the third passage is justified because the rank of the state $\adc(\ket{1}\!\!\bra{1})^{\otimes \lfloor E\rfloor} \otimes\ket{0}\!\!\bra{0}^{\otimes(n-\lfloor E \rfloor)}$ is at most $2^{\lfloor E \rfloor}$, so for $n\geq 2^{\lfloor E \rfloor} + \lfloor E \rfloor$ we can find a suitable work extraction unitary rearranging all the output state eigenvalues into the first excited state of the Hamiltonian $\ham^{(n)}$. Therefore, remembering that the supremum over $n$ can always be replaced by a limit (see Eq.~(\ref{eq:suplim}) of Appendix~\ref{app:physmaw}), we can write 
 \begin{eqnarray}
 \label{adc_classicalstrategy}
&\limsup\limits_{E\to\infty} \sup\limits_{n \geq \lceil E\rceil} \frac{\ergo\left(\ADCn (\ket{1}\!\bra{1}^{\otimes \lfloor E\rfloor}\otimes\ket{0}\!\bra{0}^{n-\lfloor E \rfloor}); \ham^{(n)} \right)}{E}
 \nonumber \\
 &= 1-\gamma = \multiergo(\adc) \; .
 \end{eqnarray}
 However, while the family $\ket{\psi_{E/n}^{(0)}}^{\otimes n}$ allows to reach the limit~(\ref{mawer_adc}) with local unitaries (see discussion below~(\ref{mawer_adcloc})), if we use the ``classical'' input states $|\phi_E^{(n)}\rangle$ we necessarily need the power of nonlocal operations to extract the energy at the output of the channel. Indeed using only local operations in this case we  have 
 \begin{eqnarray}
 \ergo_{\rm loc} \left(\ADCn (|\phi_E^{(n)}\rangle \!\langle \phi_E^{(n)}|); \ham^{(n)} \right) &=& 
 \lfloor E\rfloor \ergo \left(\ADCn (|1\rangle\!\langle 1|); \hat{h} \right) \nonumber \\
 &=& \lfloor E\rfloor \max\left(0, 1-2\gamma\right) \;, \nonumber  
 \end{eqnarray}
 which leads to  
 \begin{eqnarray}
&\limsup\limits_{E\to\infty} \sup\limits_{n \geq \lceil E\rceil} \frac{\ergo_{\rm loc}\left(\ADCn (\ket{1}\!\bra{1}^{\otimes \lfloor 
 E\rfloor}\otimes\ket{0}\!\bra{0}^{n-\lfloor E \rfloor}); \ham^{(n)} \right)}{E}
 \nonumber \\
 &= \max\left(0, 1-2\gamma\right) < \multiergo(\adc) \; .
 \label{ergoloc_gadc}
 \end{eqnarray}
 It is finally worth noting that any state $\dstate^{(n)}_d$ which is diagonal in the same basis as $\ham^{(n)}$ can be written as a linear convex combination of terms in the form $\ket{1}\!\!\bra{1}^{\otimes k}\otimes\ket{0}\!\!\bra{0}^{n-k}$. Therefore, Eq.~(\ref{ergoloc_gadc}) implies that no diagonal input state can attain the MAWER for this family of channels.
 
The take home message of all the here presented discussion is that the family of states $\ket{\psi_{E/n}^{(0)}}^{\otimes n}$ is capable of reaching the maximal MAWER efficiency under the effect of the ADC noise using only local operations: this thanks to the explicit
 presence of quantum coherence in the superposition~(\ref{stato_energiaE}).

\subsection{Generalized amplitude damping channel}\label{sec:GADC} 

\begin{figure*}[t!]
    \centering
    \includegraphics[width=\linewidth]{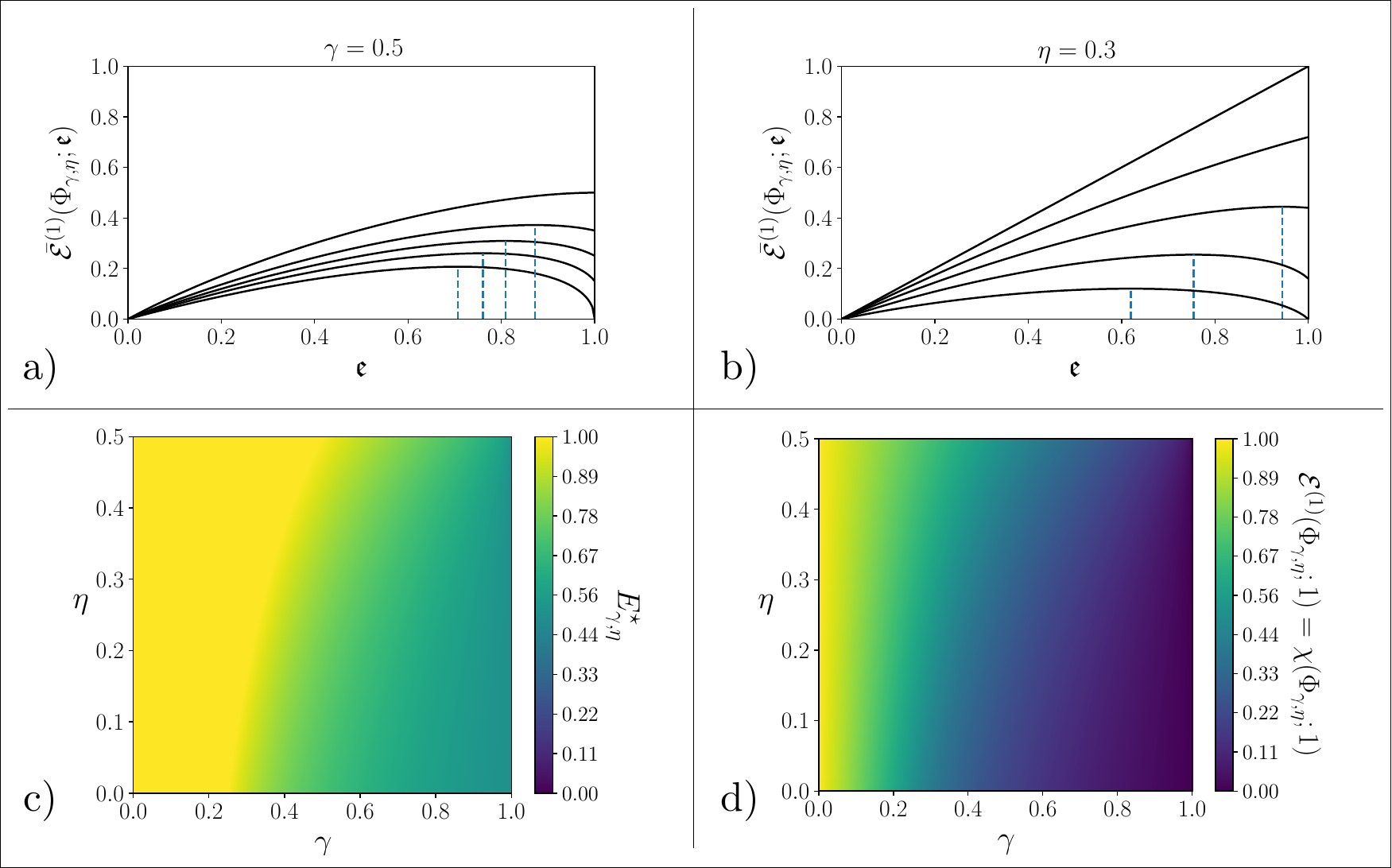}
    \caption{{\bf Panel a)} Maximal single-shot output ergotropy at fixed input energy $\bar{\ergo}^{(1)}(\gadc,\mathfrak{e})$ (see Eq. (\ref{eq: GADC erg_E})) as a function of $\mathfrak{e}$ for the GADC for different values of the parameter $\eta$ (from bottom to top [0, 0.15, 0.25, 0.35, 0.5]), by keeping $\gamma=0.5$. {\bf Panel b)} $\bar{\ergo}^{(1)}(\gadc,\mathfrak{e})$ (see Eq. (\ref{eq: GADC erg_E})) for different values of $\gamma$ (from top to bottom [0, 0.2, 0.4, 0.6, 0.8]), with $\eta=0.3$. Blue dotted lines indicate the values of $\mathfrak{e}$ maximizing $\bar{\ergo}^{(1)}(\gadc,\mathfrak{e})$. Notice that for any value of the parameters the output ergotropy is a concave function of $\mathfrak{e}$. {\bf Panel c)} Average input energy $E^{\star}_{\gamma,\eta}$ Eq. (\ref{MAX4new11}) of the optimal state (\ref{stato_energiaE}) maximizing the single-shot output ergotropy. {\bf Panel d)} Values of the single-shot output ergotropy $\ergo^{(1)}(\gadc;1)$ (see Eq. (\ref{eq: GADC erg_Emax})) for any values of the channel parameters $\gamma$ and $\eta$. We recall that due to its concavity $\ergo^{(1)}(\gadc;1) = \chi(\gadc;1)$, in the case of the GADC also $\ergo^{(1)}(\gadc;1) = C_{\rm sep}(\gadc;1)$.} 
    \label{fig:mergeGADC}
\end{figure*}

The action of a GADC $\gadc$ on a density matrix $\dstate$ is defined in terms of two noise parameters $\gamma$ and $\eta$ via the mapping 
     \cite{qubitGADC}:
    \begin{equation}\label{eq: GADC matrix}
	\gadc(\dstate) = \begin{pmatrix}
	1 - \gamma\eta - (1-\gamma)\rho_{11}  &  \sqrt{1-\gamma} \rho_{01} \\
	\sqrt{1-\gamma} \rho_{10}  & \gamma\eta + (1-\gamma)\rho_{11}
	\end{pmatrix}.
	\end{equation}
The channel $\gadc$ is in principle well defined for every $\eta \in [0,1]$. However, it represents a thermalization at finite positive temperature only when $\eta \in \left[0,{1}/{2}\right]$, we will henceforth assume this as the proper parameter space.
 Notice in particular that for $\eta=0$, the GADC reduces to a conventional ADC, i.e. $\Phi_{\gamma,0} = \Phi_\gamma$. For $\eta=1/2$ instead it describes the thermal contact with an infinite temperature. Additionally, for arbitrary $\eta \in \left[0,{1}/{2}\right]$ $\gadc$ sends the ground state into a thermal state~(\ref{GIBBS}) of inverse temperature
 \begin{eqnarray} 
 \beta(\gadc) = \beta_{\gamma,\eta} := - \ln\left(\frac{\gamma\eta}{1-\gamma \eta}\right)\;. 
 \end{eqnarray} 
  Similarly to the case of ADCs one can easily verify that GADCs commute with the Hamiltonian evolution
 (i.e. $\gadc(e^{-i\hat{h} t}\dstate e^{i\hat{h} t}) = e^{-i\hat{h} t}\gadc(\dstate)e^{i\hat{h} t}$) and maintain linear dependence between the input and output energies, i.e. 
 \begin{equation} \label{eq:adcenergylimGADC}
\en(\gadc(\dstate)) = (1-\gamma)\en(\dstate) + \gamma \eta \; .
\end{equation}

As for the ADC, we can compute the single-shot fixed-energy maximum output ergotropy
focusing on the input states~(\ref{stato_energiaE}).

In this case 
the eigenvalues of $\gadc(\ket{\psi_{\mathfrak{e}}^{(\phi)}}\!\!\bra{\psi_{\mathfrak{e}}^{(\phi)}})$ are given by
	\begin{equation}
   \label{eigenvalues_gadc}	 \lambda^{({\pm})}_{\gamma,\eta}({\mathfrak{e}})
   	 := \frac{1}{2} \left(1\pm \sqrt{(1-2\eta\gamma)^2+4\gamma(1-\gamma)(2\eta-{\mathfrak{e}}){\mathfrak{e}}}\right),
	\end{equation}
which replaced into~(\ref{ergoqubit}) lead to
	\begin{eqnarray}\label{eq: GADC erg_E}
	&&\fixedergo^{(1)}(\gadc;{\mathfrak{e}}) =  (1-\gamma){\mathfrak{e}} + \eta\gamma \\ &&\qquad - \frac{1}{2} \left(
	1- \sqrt{(1-2\eta\gamma)^2+4\gamma(1-\gamma)(2\eta-{\mathfrak{e}}){\mathfrak{e}}}\right)\; . \nonumber
	\end{eqnarray}
The behavior of $\fixedergo^{(1)}(\gadc;{\mathfrak{e}})$ is depicted in Fig. \ref{fig:mergeGADC}. Simple algebra reveals that, for fixed $\eta$ and $\gamma$, such a function is still concave w.r.t. to $\mathfrak{e}$ and that, on the interval $[0,1]$, it achieves its maximum for 
	\begin{eqnarray}
	E_{\gamma,\eta} := \min\left\{ 1, \eta + \tfrac{\sqrt{1-4 \gamma \eta (1-\eta)}}{2 \sqrt{ \gamma}}\right\}\;. 
	 \end{eqnarray}
Accordingly, introducing 
	\begin{equation}
{E}^{\star}_{\gamma,\eta}({\mathfrak{e}}):= \min\{ {\mathfrak{e}}, E_{\gamma,\eta}\} =
\label{MAX4new11} 
\min\left\{ {\mathfrak{e}}, \eta + \tfrac{\sqrt{1-4 \gamma \eta (1-\eta)}}{2 \sqrt{ \gamma}}\right\}\;,
\end{equation} 
we can write 
	 	\begin{align}\label{eq: GADC erg_Emax}
	\ergo^{(1)}(\gadc;{\mathfrak{e}}) = \fixedergo^{(1)}(\gadc; {E}^{\star}_{\gamma,\eta}({\mathfrak{e}}))\; ,
	\end{align}
	which inherits from (\ref{eq: GADC erg_E}) the property of being a concave function of ${\mathfrak{e}}$.
Replaced in Eqs.~(\ref{exact}) and~(\ref{defchi}) this finally gives
\begin{eqnarray} \label{defchiGADC} 
C_{\rm loc} \left( \gadc; \mathfrak{e} \right) &=& C_{\rm loc, sep} \left( \gadc; \mathfrak{e} \right)\nonumber \\
&=& \chi(\gadc; \mathfrak{e}) = 
  \ergo^{(1)}(\gadc;\mathfrak{e})\;,
\end{eqnarray} 
which we plot in Fig.~\ref{fig:mergeGADC} for $\mathfrak{e}$ alongside with the values of the optimal input energy $E^{\star}_{\gamma,\eta}$. 
As anticipated in the previous section,
for the GADCs $\gadc$ the quantity $\chi(\gadc, \mathfrak{e})$ also provides the value of the separable-input ergotropy capacitance,
i.e.
\begin{eqnarray} \label{csepgadc} 
 C_{\rm sep} \left( \gadc; \mathfrak{e} \right)= \chi(\gadc; \mathfrak{e})\;. 
\end{eqnarray} 
An explicit proof of this result follows from two facts. The first is simply that, since for qubit systems the total ergotropy and the ergotropy always coincide (see Eq.~(\ref{qubitimpo1}) of Appendix~\ref{sec.rev}), we can claim
$\chi_{\rm tot} (\gadc, \mathfrak{e})= \chi(\gadc, \mathfrak{e})$. 
The second is that GADCs fulfill the condition~(\ref{step1}) which, according to Ref.~\cite{tirone2023work}, enables one to show that the r.h.s. side of Eq.~(\ref{defchitot}) coincides with  $C_{\rm sep} \left(\gadc, \mathfrak{e} \right)$.
To verify this fact, notice that by definition the single-site states entering (\ref{defPSIfac}) can be identified as states as in~(\ref{stato_energiaE}), i.e. 
\begin{eqnarray} 
|\psi_j\rangle = \ket{\psi_{\mathfrak{e}_j}^{(\phi_j)}}\;,
\end{eqnarray} 
with $\phi_j$ arbitrary, and energy terms ${\mathfrak{e}_j}$ such that 
\begin{eqnarray} 
\overline{\mathfrak{e}}:=  \sum_{j=1}^n {\mathfrak{e}_j}/n  \leq  {\mathfrak{e}}\;. 
\end{eqnarray} 
Observe next that, by explicit calculation, the von Neumann entropy of these states evaluated at the output of the GADC turns out to be independent of $\phi_j$ and to be a convex function of ${\mathfrak{e}_j}$,
\begin{eqnarray} \label{entropyeta} 
&&S(\gadc(|\psi_{\mathfrak{e}_j}^{(\phi_j)}\rangle \!\langle\psi_{\mathfrak{e}_j}^{(\phi_j)}|)) \\
&&\qquad \qquad = H_{\gamma,\eta}({\mathfrak{e}_j}) \nonumber 
:= - \sum_{k=\pm} \lambda_{\gamma,\eta}^{(k)}(\mathfrak{e}_j) \log_2  \lambda_{\gamma,\eta}^{(k)}(\mathfrak{e}_j) \;,\end{eqnarray} 
(see Fig.~\ref{fig:entropy}).

Thus we can write 
        \begin{eqnarray}
        S\left(\gadc^{\otimes n}(|\Psi_{\rm fact}^{(n)}\rangle\!\langle \Psi_{\rm fact}^{(n)}|) \right)
        &=&\sum_{j=1}^n  S\left(\gadc(\ket{\psi_{\mathfrak{e}_j}^{(\phi_j)}}\!\!\bra{\psi_{\mathfrak{e}_j}^{(\phi_j)}} )\right)
            \nonumber \\
            &=&   \sum_{j=1}^n H_{\gamma,\lambda}({\mathfrak{e}_j})  
            \nonumber \\
            &\geq&  n  H_{\gamma,\lambda}(\sum_{j=1}^n {\mathfrak{e}_j}/n) \nonumber \\
            &=&n  S\left(\gadc(\ket{\psi_{\overline{\mathfrak{e}}}^{(0)}}\!\!\bra{\psi_{\overline{\mathfrak{e}}}^{(0)}} )\right)
            \nonumber \\
            &=& S\left((\gadc(\ket{\psi_{\overline{\mathfrak{e}}}^{(0)}}\!\!\bra{\psi_{\overline{\mathfrak{e}}}^{(0)}}))^{\otimes n} \right) \;, \nonumber 
        \end{eqnarray}
which explicitly proves that the state $\ket{\psi_{\overline{\mathfrak{e}}}^{(0)}}^{\otimes n}$ minimizes the output entropy of the channel $\gadc^{\otimes n}$ over the set of pure states $|\Psi_{\rm fact}^{(n)}\rangle$ of fixed input energy $n\overline{\mathfrak{e}}$.
        As the entropy and the total ergotropy are related by a bijective relation (see e.g.~\cite{PhysRevA.105.012414}) the above result also implies that the equipartite state $\ket{\psi_{\overline{\mathfrak{e}}}^{(0)}}^{\otimes n}$ is the one that maximizes the output total ergotropy $\ergo_{\rm tot}(\gadc(|\Psi_{\rm fact}^{(n)}\rangle \!\langle \Psi_{\rm fact}^{(n)}|))$ for all such states. Setting $\hat{\sigma}_{\mathfrak{e}} =  \ket{\psi_{\overline{\mathfrak{e}}}^{(0)}}\!\! \bra{\psi_{\overline{\mathfrak{e}}}^{(0)}}$ we can hence write
        \begin{eqnarray} 
\ergo_{\rm tot}((\gadc(\hat{\sigma}_{\mathfrak{e}}))^{\otimes n};\ham^{(n)}) &\geq& \ergo_{\rm tot}(\gadc^{\otimes n}(|\Psi_{\rm fact}^{(n)}\rangle\!\langle \Psi_{\rm fact}^{(n)}|);\ham^{(n)}) \nonumber \\ &\geq& 
\ergo(\gadc^{\otimes n}(|\Psi_{\rm fact}^{(n)}\rangle\!\langle \Psi_{\rm fact}^{(n)}|);\ham^{(n)}) \nonumber \end{eqnarray} 
which leads to Eq.~(\ref{step1}) by the following identities 
\begin{eqnarray} \nonumber 
&& \ergo_{\rm tot}((\gadc(\hat{\sigma}_{\mathfrak{e}}))^{\otimes n};\ham^{(n)})  = n \ergo_{\rm tot}(\gadc(\hat{\sigma}_{\mathfrak{e}});\hat{h})  \\
&&\qquad \qquad = n \ergo(\gadc(\hat{\sigma}_{\mathfrak{e}});\hat{h}) = \ergo((\gadc(\hat{\sigma}_{\mathfrak{e}}))^{\otimes n};\ham^{(n)})\;,\nonumber
\end{eqnarray} 
        the first being a general property of the total ergotropy, the last two applying instead for the special case of qubits thanks to Eq.~(\ref{qubitsimpo}).
        
        We conclude by remarking that as in the case of the ADC's we cannot provide 
        the value of the ergotropic capacitance of GADCs and observing that in this case Eq.~(\ref{csepadcbounds}) becomes        \begin{eqnarray} \label{csepadcboundsnew} 
(1-\gamma)  \mathfrak{e} + \gamma \eta  \geq  C_{\cal E} \left( \gadc; \mathfrak{e} \right)\geq  \chi(\gadc; \mathfrak{e})\;.
\end{eqnarray} 

\begin{figure}[h!]
\centering
\includegraphics[width=0.95\linewidth]{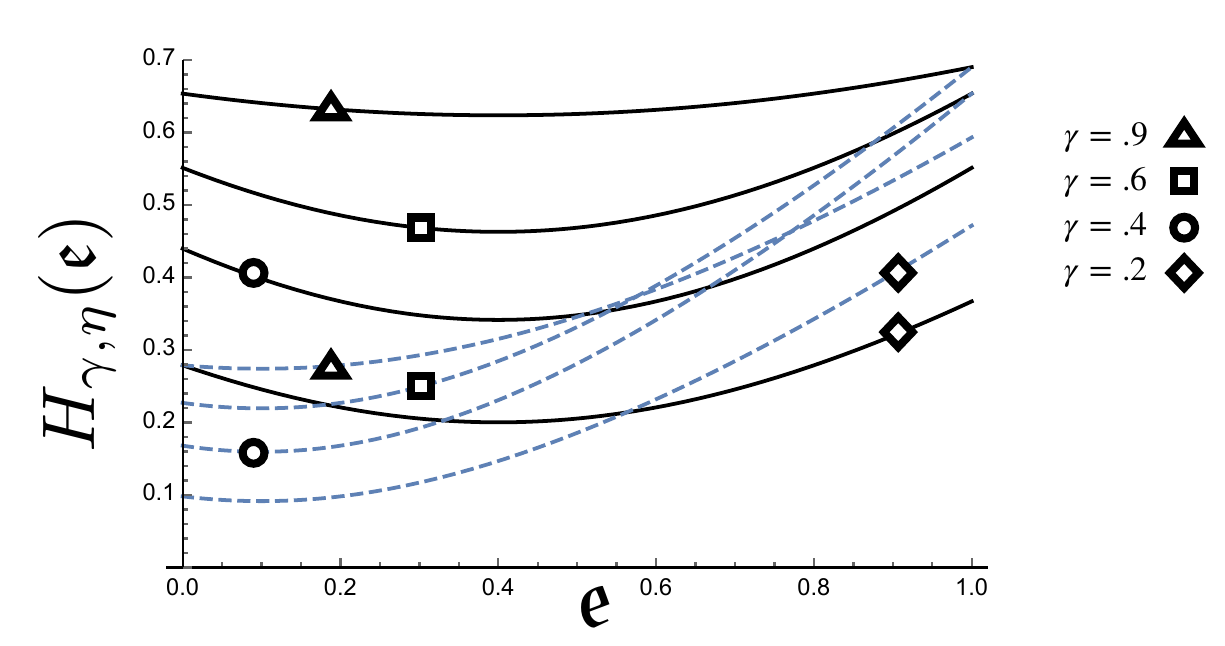}
\caption{Plot of the output entropy $H_{\gamma,\eta}({\mathfrak{e}})$~(\ref{entropyeta}) of the state~$|\psi_{\mathfrak{e}}^{(\phi)}\rangle$ 
emerging from the GADC $\gadc$ as a function of  ${\mathfrak{e}}$ for $\eta=0.1$ (dashed curves),
$\eta=
0.4$ (black curves),  and various values of $\gamma$.
}
\label{fig:entropy}
\end{figure}

\subsubsection{Estimating MAWERs for GADCs}
Theorem~\ref{theo1} and Eqs.~(\ref{defchiADC}) and (\ref{csepadc}) allow us to explicitly evaluate the local and separable-input MAWERs in terms of the first derivative of $\ergo^{(1)}(\gadc;\mathfrak{e})$ at $\mathfrak{e}=0$: 
\begin{equation}{\cal J}_{\rm sep}(\gadc) ={\cal J}_{\rm loc}(\gadc) =
{\cal J}_{\rm loc, sep}(\gadc) = \frac{1-\gamma}{1-2\eta\gamma}  \; . \label{eq:cohstrat_gadcmaw}
\end{equation}
Unfortunately, differently from the case of the amplitude damping channel $\adc$,  
we do not have upper bounds allowing us to establish that the r.h.s. of  (\ref{eq:cohstrat_gadcmaw}) coincides with the unrestricted ergotropic MAWER ${\cal J}_{\cal E}(\gadc)$.
 An interesting feature of the model emerges by noticing that, since for $\eta\in (0,1/2]$ 
the term $\frac{1-\gamma}{1-2\eta\gamma}$ is always strictly larger than $1-\gamma$, we can claim that 
\begin{equation}
{\cal J}_{\rm loc, sep}(\gadc) > {\cal J}_{\cal E}(\adc) \;, \qquad \forall \eta >0\;,
   \label{eq:cohstrat_gadcmaweewe}
\end{equation}
indicating that the presence of thermal noise in the model can help with the work extraction process. 
The gap in Eq.~(\ref{eq:cohstrat_gadcmaweewe}) becomes particularly significative in the infinite temperature limit 
($\eta = 1/2$) where, according to  (\ref{eq:cohstrat_gadcmaw}), we get 
  \begin{eqnarray} {\cal J}_{\rm loc, sep}(\Phi_{\gamma, \eta=1/2}) = 1\;, \qquad \forall \gamma\in [0,1)\;,\end{eqnarray} 
  indicating that in this case one can recover the same amount of energy as the one initially stored into the QB.
  
By explicit evaluation it turns out that the value~(\ref{eq:cohstrat_gadcmaw}) can be achieved by taking as input the family of factorized states $\ket{\psi_{E/n}^{(0)}}^{\otimes n}$ as in the notation of Eq.~(\ref{stato_energiaE}). Following the same analysis of Eq.~(\ref{impo1}) we can in fact write
  \begin{eqnarray}\label{impo1newnew} 
&&\tfrac{\ergo(\gadc^{\otimes n}(\ket{\psi_{E/n}^{(0)}}\!\bra{\psi_{E/n}^{(0)}}^{\otimes n});\ham^{(n)})}{E} =\tfrac{n  
\ergo(\gadc(\ket{\psi_{E/n}^{(0)}}\!\bra{\psi_{E/n}^{(0)}});\hat{h})}{E} \nonumber \\
&& = (1 - \gamma) +\tfrac{2\gamma(1-\gamma)\eta}{1-2\eta\gamma} + O\left(\tfrac{E}{n}\right)=
\tfrac{1-\gamma}{1-2\eta\gamma} + O\left(\tfrac{E}{n}\right), 
 \end{eqnarray}
 which in the $n \to \infty$ limit leads to~(\ref{eq:cohstrat_gadcmaw}). 
On the contrary no analogue of Eq.~(\ref{adc_classicalstrategy}) is true for the channel $\gadc$ when $\eta > 0$: at finite temperature the ``classical'' input states $|\phi_E^{(n)}\rangle$ of~(\ref{classical}) yield a work/energy ratio strictly smaller than the one granted by the coherent input state $\ket{\psi_{E/n}^{(0)}}^{\otimes n}$.
\begin{figure}[t!]
	\centering
	\includegraphics[width=\linewidth]{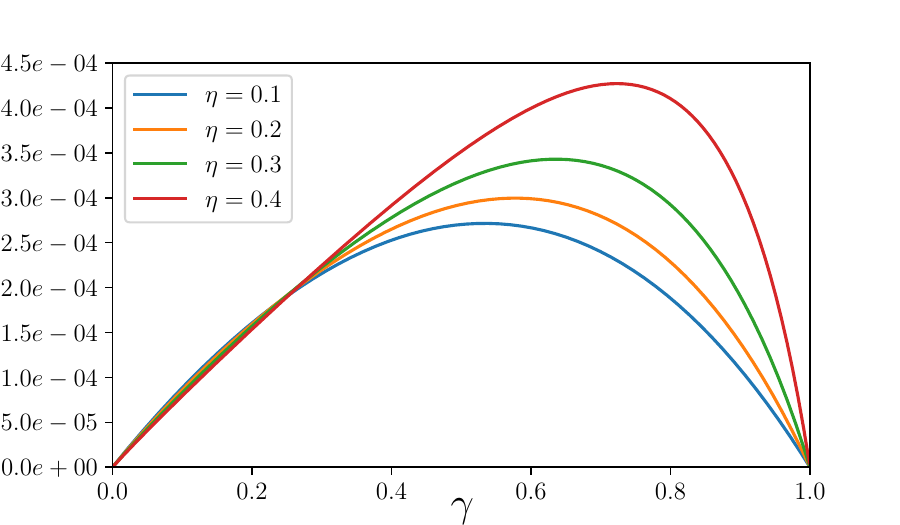}
	\caption{Difference between
	${\cal J}_{\rm loc, sep}(\gadc)$ of (\ref{eq:cohstrat_gadcmaw}) and 
	the asymptotic work/energy ration achievable with the ``classical''  states $|\phi_E^{(n)}\rangle$ computed in
(\ref{impo1newnewasdf}) 
	 as a function of the damping parameter $\gamma$ for various values of $\eta$.}
    \label{fig:gadc_mawer_cfr}
	\end{figure}
To verify this we observe that the (possibly nonlocal) asymptotic work/energy ratio achievable with the states $|\phi_E^{(n)}\rangle$
 can be upper bounded as
  \begin{eqnarray}&&  \label{impo1newnewasdf} 
  \limsup_{E\rightarrow \infty} \lim_{n\rightarrow \infty}
  \tfrac{\ergo(\gadc^{\otimes n}(\ket{\phi^{(n)}_{E}}\!\bra{\phi^{(n)}_{E}});\ham^{(n)})}{E}
 \\
&&\qquad = \limsup_{E\rightarrow \infty} \lim_{n\rightarrow \infty} 
\tfrac{\ergo(\left(\gadc(\ket{1}\!\bra{1})\right)^{\otimes\lfloor E \rfloor} \otimes \hat{\tau}_{\beta_{\gamma,\eta}}^{\otimes (n-\lfloor E \rfloor)};\ham^{(n)})}{E}\nonumber  \\ \nonumber 
&&\qquad \leq\limsup_{E\rightarrow \infty}
 \tfrac{{\cal W}_{\beta_{\gamma,\eta}}(\left(\gadc(\ket{1}\!\bra{1})\right)^{\otimes\lfloor E \rfloor} ; \ham^{(\lfloor E \rfloor)})}{E}\\ 
&& \qquad =\left(\limsup_{E\rightarrow \infty} \nonumber 
\frac{\lfloor E \rfloor }{E} \right) {\cal W}_{\beta_{\gamma,\eta}}(\gadc(\ket{1}\!\!\bra{1}) ; \hat{h}) 
\\ && \qquad ={\cal W}_{\beta_{\gamma,\eta}}(\gadc(\ket{1}\!\!\bra{1}) ; \hat{h}) \nonumber \\ \nonumber 
&& \qquad = (1-\gamma) + \gamma \eta- \tfrac{H_{\gamma,\eta}(1)-\ln (1+ e^{-\beta_{\gamma,\eta}})}{\beta_{\gamma,\eta}} \;, 
 \end{eqnarray}
 where in the third passage we exploited results in \cite{obejko2021}, here $H_{\gamma,\eta}(1)$ being the output entropy of $|1\rangle$ as computed in (\ref{entropyeta}); here $\mathcal{W}_{\beta_{\gamma,\eta}}$ is the energy extractable with the use of a perfect thermal bath with inverse temperature $\beta_{\gamma,\eta}$ as defined in \cite{quantumworkcapacitances}.
  A comparison of~(\ref{impo1newnewasdf}) with the formula
 (\ref{eq:cohstrat_gadcmaw}), which expresses ${\cal J}_{\rm loc, sep}(\gadc)$, reveals that, apart from the trivial cases where $\gamma=0,1$, the former is always strictly smaller than the latter -- see e.g. Fig.~\ref{fig:gadc_mawer_cfr}.
In Fig.~\ref{fig: GADC_Diff_asympt_Erg1} instead we report the relative difference
\begin{eqnarray}\label{reld} 
\Delta(\gadc) := \frac{{\cal J}_{\rm loc, sep}(\gadc)-r(\gadc)}{r(\gadc)}\;,
\end{eqnarray} 
where 
 \begin{eqnarray}
 r(\gadc):=
 \frac{ \ergo^{(1)}(\gadc; 1)}{{E}^{\star}_{\gamma,\eta}(1)}\;, 
 \end{eqnarray}
  is the ratio between the single-shot maximum output ergotropy~(\ref{eq: GADC erg_Emax}) at full input energy (i.e. the separable-input work capacitance ${C_{\rm sep}( \gadc, 1 )}$) and the corresponding optimal 
 energy input~(\ref{MAX4new11}).
 As evident from the plot, such quantity is strictly positive in a vast volume of the parameter region. This points out that the asymptotic strategy maximizing the local MAWER  performs better than a direct strategy in which 
  one blindly utilizes all the q-cells to store as much input energy as possible. In the previous analysis we have fixed the parameter $\gamma$, which is linked to the discharging time, and we proved that for any fixed $\gamma$ coherence enhances the energetic efficiency. If we instead fix the energy that we want to preserve, the same argument guerantees us that it will be preserved for longer times.

\begin{figure}[t!]
\centering
\includegraphics[width = 1\linewidth]{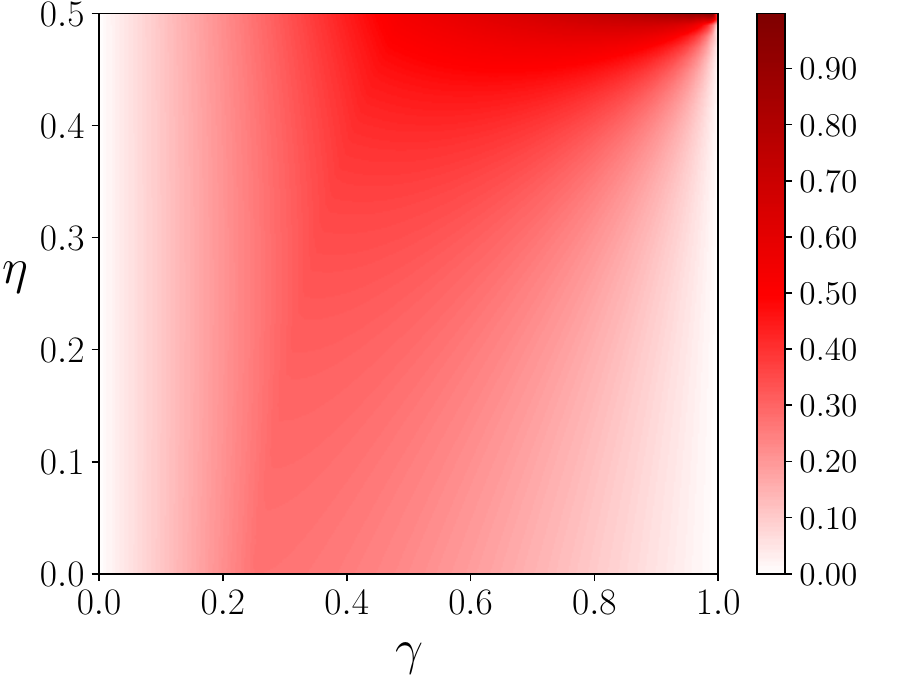}
\caption{Plot of the relative difference $D(\gadc)$ of Eq.~(\ref{reld}) between the ${\cal J}_{\rm loc, sep}(\gadc)$ and the ratio $r(\gadc)$ for different values of the noise parameters $\gamma$, $\eta$. }
\label{fig: GADC_Diff_asympt_Erg1}
\end{figure}

\subsubsection*{Two-uses output ergotropy}
In this section we investigate possible superadditivity effects of the GADC. We do so by an extensive numerical calculation on two-qubits states, finding no violation of additivity in the two-uses case.
We performed a numerical search looking for two-qubit states $\rho^{(2)}$ such that 
\begin{eqnarray}
\ergo(\gadc^{\otimes 2})\left( \rho^{(2)} \right) > 2\ergo^{(1)}\left( \gadc ; \en(\rho^{(2)} ) /2\right) \; .
\label{superadditivity_gadc}
\end{eqnarray}
Any pure two-qubit state $\ket{\psi}\!\!\bra{\psi}$ can be parameterized as
\begin{eqnarray}
\ket{\psi} = \sum_{i=1}^4 c_i \ket{i} e^{i\phi_i} \; ,
\end{eqnarray}
with $c_i \in [0,1]$ and $\phi_i \in [0, 2\pi)$.
Since we know that the spectrum of the output state $\gadc\left( \ket{\psi}\!\!\bra{\psi} \right)$ is independent of the phases $\phi_i$, we can set without loss of generality $\phi_i = 0$.
Using the normalization constraint $\sum_i c^2_i = 1$ and the average energy constraint $\braket{\phi|E|\phi}$, we can restrict the search to a sample space of two real parameters. Among these states, to test possible additivity violations, we selected separable states and entangled input states of the form $\sqrt{1-E/2}
\ket{00}+\sqrt{E/2}\ket{11}$ and compared the associated output ergotropy (at fixed average input energy $\mathfrak{e}$). The small number of real parameters involved in the optimization allows a brute-force span of the parameter space. The numerical search failed though to find, for any value of $\gamma$ and $\eta$ (including the ADC case $\eta=0$), any instance $\rho$ verifying the strict inequality~(\ref{superadditivity_gadc}). Therefore, while non exhaustive, numerical evidence seems to suggest that $\ergo^{(2)}\left( \gadc; E \right) = \ergo^{(1)}\left( \gadc; E \right)$. Considering that the same technique was sufficient to find violations for other qubit channels (e.g. dephasing \cite{quantumworkcapacitances}), we conjecture the optimal output ergotropy of the GADC (and ADC) to be additive; i.e. that $\ergo^{(n)}\left( \gadc; E \right) = n\ergo^{(1)}\left( \gadc; E \right)$ for every $n$ and $E$.

	\subsection{Composition of dephasing channels and ADCs} \label{sec:metaphysics}

Adding extra dephasing to an ADC $\adc$ leads to the identification of a class of maps defined by the transformations~\cite{JIANG_2012} 
    \begin{equation}\label{eq: ADCDEPH matrix}
	D_{\kappa,\gamma} (\dstate) = \begin{pmatrix}
	\rho_{00} + \gamma \rho_{11} & \sqrt{1-\kappa}\sqrt{1-\gamma}\rho_{01} \\
	\sqrt{1-\kappa}\sqrt{1-\gamma}\rho_{10} & (1-\gamma)\rho_{11}
	\end{pmatrix}\;, 
	\end{equation}
	with $\kappa\in[0,1]$ being the dephasing parameter ($\kappa=0$ representing zero extra noise while 
	$\kappa=1$ representing complete extra dephasing). 
	These maps still commute with the Hamiltonian evolution of the system and have the same input-output energy relations of the ADC's, i.e. 
	  \begin{equation} \label{eq:adcenergylimGADCdep}
\en(D_{\kappa,\gamma}(\dstate)) =\en(\adc(\dstate)) = (1-\gamma)\en(\dstate)  \; .
\end{equation}
The optimization of the 
single-shot fixed-energy maximal output ergotropy proceeds along the same lines 
of the other classes of maps we have analyzed so far. In particular, also in this case we can restrict the analysis to input states~(\ref{stato_energiaE}) whose outputs have eigenvalues
	\begin{equation}
	\lambda^{(\pm)}_{\kappa,\gamma}({\mathfrak{e}}) := \frac{1}{2}\left(1 \pm \sqrt{1-4(1-\gamma){\mathfrak{e}}[{\mathfrak{e}}(\gamma - \kappa) + \kappa)]}\right) \; .
	\end{equation}
 Eq.~(\ref{ergoqubit}) hence allows us to express the fixed energy single-shot maximal ergotropy as
    \begin{eqnarray}\label{eq_app ADCDEPH ergotropy}
    \fixedergo^{(1)}(D_{\kappa,\gamma}; {\mathfrak{e}}) &=& (1-\gamma){\mathfrak{e}} - \frac{1}{2} \\ \nonumber   
	&&+ \frac{1}{2}\sqrt{1-4(1-\gamma){\mathfrak{e}}[{\mathfrak{e}}(\gamma - \kappa) + \kappa)]} \; ,
	\end{eqnarray}
 which, at variance with what happens for ADCs and GADCs, is not always concave w.r.t. to ${\mathfrak{e}}$. 
Indeed computing the second order derivative we observe that 
	\begin{eqnarray} \label{eq:domain}
	&& 
\tfrac{\partial^2\fixedergo^{(1)}(D_{\kappa,\gamma}; {\mathfrak{e}})}{\partial^2 E}\\ 
	\nonumber &&	\quad =
	\tfrac{-2(1-\gamma)(1-\kappa) (\gamma-\kappa (1-\gamma))}{\left(1-4(1-\gamma){\mathfrak{e}}[{\mathfrak{e}}(\gamma - \kappa) + \kappa)]\right)^{{3/2}}}\geq 0  \; 
	 \Longleftrightarrow \; \kappa \geq \tfrac{\gamma}{1-
	\gamma} \;, 
	\end{eqnarray} 
which implies 
	\begin{eqnarray} 
	\begin{cases}
	\kappa < \tfrac{\gamma}{1-
	\gamma} \qquad \Longrightarrow \qquad \fixedergo^{(1)}(D_{\kappa,\gamma}; {\mathfrak{e}}) \quad \mbox{concave}  \\
	\kappa = \tfrac{\gamma}{1-
	\gamma} \qquad \Longrightarrow \qquad \fixedergo^{(1)}(D_{\kappa,\gamma}; {\mathfrak{e}}) \quad \mbox{linear} \\
	\kappa >  \tfrac{\gamma}{1-
	\gamma} \qquad \Longrightarrow \qquad \fixedergo^{(1)}(D_{\kappa,\gamma}; {\mathfrak{e}}) \quad \mbox{convex.} 
	\end{cases} 
	\end{eqnarray} 
	Considering that $\fixedergo^{(1)}(D_{\kappa,\gamma}; {\mathfrak{e}})$ is always positive semidefinite (with 
	$\fixedergo^{(1)}(D_{\kappa,\gamma}; 0)=0$), this means that for $\kappa \geq  \tfrac{\gamma}{1-\gamma}$, the function~(\ref{eq_app ADCDEPH ergotropy}) is monotonically increasing, achieving its maximum value
	 \begin{eqnarray}\label{eq_app ADCDEPH ergotropymax}
    \fixedergo^{(1)}(D_{\kappa,\gamma}; 1) &=& \max\left(0, 1-2\gamma\right) \; ,
	\end{eqnarray}
	at the extreme of the allowed domain (i.e. for ${\mathfrak{e}}=1$). Here, $\kappa \geq \frac{\gamma}{1-\gamma}$ is possible for $\gamma \leq 1/2$, as it can be seen in Eq.~\eqref{eq:domain}.
	Accordingly we can conclude that 
	\begin{equation} 
	 \kappa  \in [ \tfrac{\gamma}{1-
	\gamma},1]  \; \Longrightarrow \; \left\{
	\begin{array}{l}
	E_{\kappa,\gamma}^{\star}({\mathfrak{e}}) = E \;, \\ \\ 
	\ergo^{(1)}(D_{\kappa,\gamma}; {\mathfrak{e}})= \fixedergo^{(1)}(D_{\kappa,\gamma}; {\mathfrak{e}})\;, \\ \\
	\chi(D_{\kappa,\gamma},\mathfrak{e}) =\mathfrak{e} \max\left(0, 1-2\gamma\right)  \;, 
	\end{array} \right.
	\end{equation} 
	where, $E_{\kappa,\gamma}^{\star}$ represents the energy value maximizing $\ergo^{(1)}(D_{\kappa,\gamma}; {\mathfrak{e}})$ in the interval $[0,E]$. In the derivation above we used the fact that $\fixedergo(D_{\kappa,\gamma};\mathfrak{e})$ in this parameter region is a monotonically increasing convex function of $\mathfrak{e}$, so $\chi(D_{\kappa,\gamma},\mathfrak{e})$ is provided by the linear interpolation between its final value (i.e. $\fixedergo^{(1)}(D_{\kappa,\gamma}; 1)$) and initial value  (i.e. $\fixedergo^{(1)}(D_{\kappa,\gamma}; 0)=0$).
For $\kappa < \tfrac{\gamma}{1-\gamma}$ instead, $\fixedergo^{(1)}(D_{\kappa,\gamma}; {\mathfrak{e}})$ is concave and can achieve its maximum inside the allowed domain~$[0,1]$. 
A study of the first order derivative of the function allows us to verify that, if $\kappa \geq  4\gamma-1$ (a region which incidentally fully includes also the cases $\kappa \geq \tfrac{\gamma}{1-\gamma}$ analyzed before), the maximum is still achieved for ${\mathfrak{e}}=1$, meaning that $\fixedergo^{(1)}(D_{\kappa,\gamma}; {\mathfrak{e}})$ remains monotonous w.r.t. to ${\mathfrak{e}}$. 
In this case we can hence write
	\begin{equation} 
\forall \kappa \in[4\gamma -1,\tfrac{\gamma}{1-
	\gamma}] \; \Longrightarrow \; \left\{
	\begin{array}{l}
	E_{\kappa,\gamma}^{\star}({\mathfrak{e}}) = {\mathfrak{e}} \;, \\ \\ 
	\ergo^{(1)}(D_{\kappa,\gamma}; {\mathfrak{e}})  = \fixedergo^{(1)}(D_{\kappa,\gamma}; {\mathfrak{e}})\;, \\ \\
	\chi(D_{\kappa,\gamma},\mathfrak{e}) = \ergo^{(1)}(D_{\kappa,\gamma}; \mathfrak{e})\;, 
	\end{array} \right.
	\end{equation} 
where now the value of $\chi(D_{\kappa,\gamma},\mathfrak{e})$ coincides with 
$\ergo^{(1)}(D_{\kappa,\gamma}; \mathfrak{e})$ since the latter is a concave function.
Finally, for  $\kappa \leq 4 \gamma -1$ the maximum of $\fixedergo^{(1)}(D_{\kappa,\gamma}; {\mathfrak{e}})$ is achieved for 
		\begin{eqnarray}
		E_{\kappa,\gamma}:=\tfrac{-\kappa + \sqrt{\gamma-\kappa(1-\gamma)}}{2(\gamma-\kappa)}\;, \end{eqnarray} 
		implying 
		\begin{equation} 
\kappa \in [0,4\gamma -1] 
	\; \Longrightarrow \; \left\{
	\begin{array}{l}
	E_{\kappa,\gamma}^{\star}({\mathfrak{e}}) = \min\{ {\mathfrak{e}},E_{\kappa,\gamma}\} 
	 \;, \\ \\ 
	\ergo^{(1)}(D_{\kappa,\gamma}; {\mathfrak{e}}) = \fixedergo^{(1)}(D_{\kappa,\gamma}; E_{\kappa,\gamma}^{\star}({\mathfrak{e}}))\;, \\ \\
	\chi(D_{\kappa,\gamma},\mathfrak{e}) = \ergo^{(1)}(D_{\kappa,\gamma}; \mathfrak{e})\;, 
	\end{array} \right.
	\end{equation} 
	 where, to express $\chi(D_{\kappa,\gamma},\mathfrak{e})$, we used the fact that since $\fixedergo^{(1)}(D_{\kappa,\gamma}; {\mathfrak{e}})$ is concave then 
 $\ergo^{(1)}(D_{\kappa,\gamma}; \mathfrak{e})$ is concave too.
In Fig.~\ref{fig: ADC_DEPH1_maxErg} we report a plot of $\ergo^{(1)}(D_{\kappa,\gamma}; \mathfrak{e})$
  and $E_{\kappa,\gamma}^{\star}(\mathfrak{e})$ in terms of $\gamma$ and $\kappa$ for $\mathfrak{e}=1$.
 
Invoking~(\ref{exact}) we can hence compute the value of the local (and separable-input local) ergotropic capacitance of the model,
		\begin{equation} 
C_{\rm loc} \left(D_{\kappa,\gamma}; \mathfrak{e} \right)=
	\;  \left\{
	\begin{array}{l}
	\mathfrak{e} \max\left(0, 1-2\gamma\right) 
	\quad   \forall \kappa  \in [ \tfrac{\gamma}{1-
	\gamma},1] 
	 \;, \\ \\ 
	 \fixedergo^{(1)}(D_{\kappa,\gamma}; \mathfrak{e}) 
	 \quad    \forall \kappa \in[4\gamma -1,\tfrac{\gamma}{1-
	\gamma}]\;,  \\ \\
	\fixedergo^{(1)}(D_{\kappa,\gamma}; \min\{ \mathfrak{e},E_{\kappa,\gamma}\})  \\ 
	 \qquad   \qquad  \qquad  \qquad\forall \kappa \in [0,4\gamma -1] \;.
	\end{array} \right. \label{EXCLOC} 
	\end{equation} 
We stress that, at variance with what happens with the GADCs, in this case the evaluation of $C_{\rm sep} \left(D_{\kappa,\gamma}, \mathfrak{e} \right)$ cannot be performed due to the fact that property (\ref{step1}) no longer applies. Regarding the MAWERs, invoking Theorem~\ref{theo1}, from~(\ref{EXCLOC})  
we can write

	\begin{figure}[]
		\setlength{\lineskip}{3pt}
		\centering
		\includegraphics[width=0.9\linewidth]{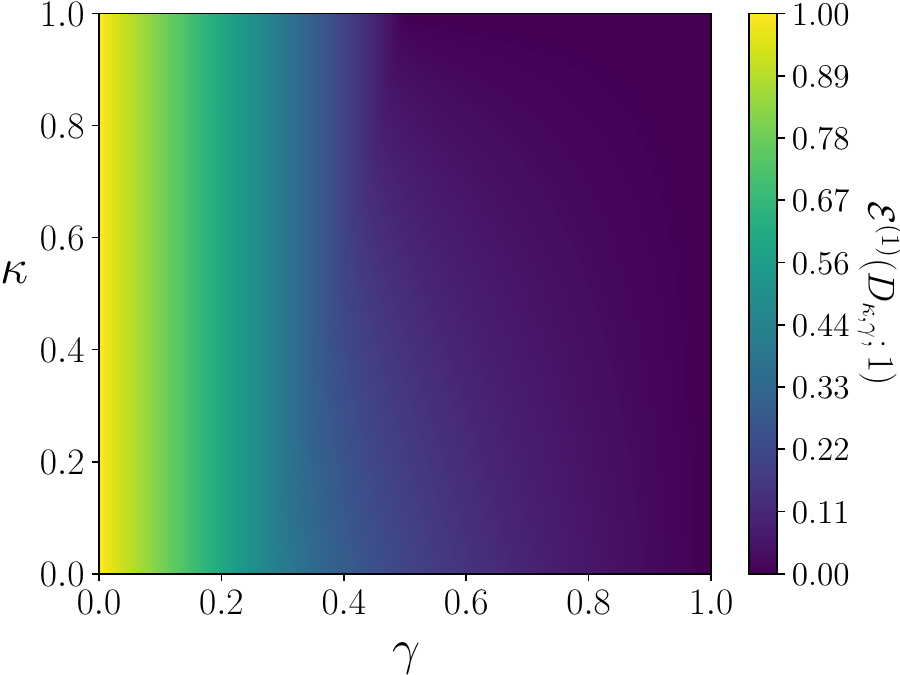}
		\includegraphics[width=0.9\linewidth]{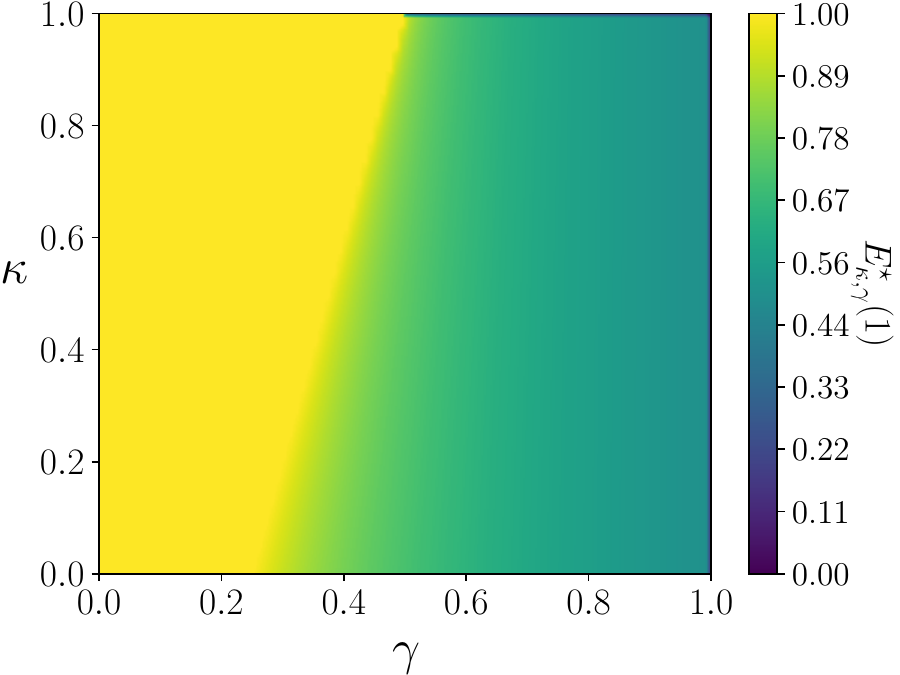}
		\caption{ Plot of $\ergo^{(1)}(D_{\kappa,\gamma};1)$ (\textbf{Upper panel}) and of the optimal input energy $E_{\kappa,\gamma}^{\star}(\mathfrak{e})$ (\textbf{Lower panel}) as functions of $\gamma$ and $\kappa$ for $\mathfrak{e}=1$.}
		\label{fig: ADC_DEPH1_maxErg}
	\end{figure}
\begin{equation} \label{eq:mawerdeph}
{\cal J}_{\rm loc} \left(D_{\kappa,\gamma} \right)=
	\;  \left\{
	\begin{array}{l}
	 \max\left(0, 1-2\gamma\right) 
	\quad   \forall \kappa  \in [ \tfrac{\gamma}{1-
	\gamma},1] 
	 \;, \\ \\ 
	 (1-\gamma)(1-\kappa) \qquad  \forall \kappa \in[0,\tfrac{\gamma}{1-
	\gamma}]\;,
	\end{array} \right. 
\end{equation} 
where we used the fact that 
	\begin{eqnarray}
	\left. \frac{\partial  \fixedergo^{(1)}(D_{\kappa,\gamma}; \mathfrak{e})}{\partial \mathfrak{e}}
	\right|_{\mathfrak{e}=0} = (1-\gamma)(1-\kappa)\;. 
	\end{eqnarray} 
It is worth stressing that in the strong  dephasing regime (i.e. for $\kappa  \in [ \tfrac{\gamma}{1-\gamma},1]$) the value of ${\cal J}_{\rm loc}\left(D_{\kappa,\gamma} \right)$ coincides with the value of the local MAWER we could get in the absence of dephasing (i.e. for the ADC $\adc$) when using ``classical'' input states $|\phi_E^{(n)}\rangle$ (see Eq.~(\ref{ergoloc_gadc})). In Fig.~\ref{fig: ADC_DEPH__Diff_asympt_Erg1} we plot the difference of the expressions in Eq. (\ref{eq:mawerdeph}), we see that coherence is needed to obtain higher values of the local MAWER in the regime of weak dephasing, i.e. for $\kappa\in[0,\tfrac{\gamma}{1-\gamma}]$. This difference is a proper indicator of the quantum advantage, because the first expression in~\eqref{eq:mawerdeph} gives the extractable energy if the initial state is classical, while the second shows us the extractable energy if the input state possess quantum coherence.
	
	\begin{figure}[h!]
		\centering
		\includegraphics[width = 1\linewidth]{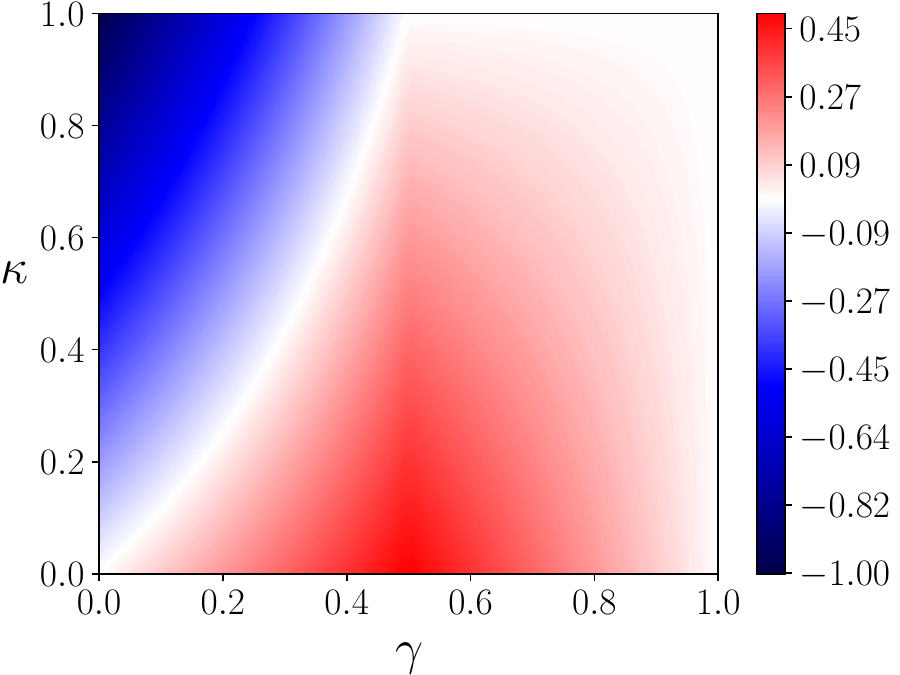}
		\caption{Difference between the two possible values of $\mathcal{J}_{\rm loc}(D_{\kappa,\gamma})$ in (\ref{eq:mawerdeph}), we see that for $\kappa \in(\tfrac{\gamma}{1-\gamma},1]$ the classical strategy with input state $\ket{\phi^{(n)}_E}$ is optimal, while for $\kappa \in[0,\tfrac{\gamma}{1-\gamma}]$ the coherent strategy $\ket{\psi_E^{(n)}}$ outperforms the other. So, in a vast region of parameters, coherence enhances the efficiency of the energy storage. If the difference is positive, then $\mathcal{J}_{\rm loc}(D_{\kappa,\gamma})$ is attained by a coherent input state; while, if it is negative, the local MAWER is obtained via a classical input state. So the this difference clearly represent the parameter region where we obtain a quantum advantage.}
		\label{fig: ADC_DEPH__Diff_asympt_Erg1}
	\end{figure}

\section{Discussion} \label{sec:discuss}
In this article we examined to what extent a QB composed of an arbitrary number of q-cells can maintain its initial energy after the action of environmental noise. In particular, we focused on the self-discharging process of the battery as a result of contact with a thermal environment and we studied how the loss of coherence due to extra dephasing noise affects the stability of a QB. We did so by adopting the framework of the quantum work capacitances and MAWERs. 
More explicitly, among other considerations:
\begin{itemize}
    \item In Theorem \ref{theo1} we show how the capacitance and the MAWER associated with a specific functional (e.g. the ergotropy) are directly and formally related.
    
    \item In Sec. \ref{sec:adc}, addressing the ADC, we provide an analytical expression for the maximizer input energy of $\ergo^{(1)}$ and $
\fixedergo^{(1)}$, which are consequently also  found analytically. We  show how for the ADC $C_{\rm loc} = C_{\rm loc, sep} = \chi = 
  \ergo^{(1)}$ and provide upper and lower bounds for $C_{\cal E}$. Remarkably, concerning the MAWER, we prove that ${\cal J}_{\cal E} = {\cal J}_{\rm sep} ={\cal J}_{\rm loc} ={\cal J}_{\rm loc, sep} $ and give their exact value.
  
 \item In Sec. \ref{sec:GADC}, addressing the GADC, as for the ADC we find the analytical values for the input energy maximizer of $\ergo^{(1)}$ and $\fixedergo^{(1)}$, the analytical expression of $C_{\rm loc} = C_{\rm loc, sep} = \chi = C_{\rm sep} = 
  \ergo^{(1)}$ and upper and lower bounds for $C_{\cal E}$.
  For MAWERs we show that
  ${\cal J}_{\cal E} = {\cal J}_{\rm sep}={\cal J}_{\rm loc} ={\cal J}_{\rm loc, sep} $ and these have an analytical expression.

\item In Sec. \ref{sec:metaphysics} we analyzed the simultaneous action of dephasing and ADC. We found, for various regimes of the channel parameters, analytical expressions for the input energy maximizer of $\ergo^{(1)}$ and $\fixedergo^{(1)}$, the values of $\chi$, $C_{\rm loc}$ and $\cal{J}_{\rm loc}$.
  
\end{itemize} 
Interestingly, in the case of the generalized amplitude damping channel, the analysis shows that quantum coherence can enhance the efficiency of energy storage, while this advantage is lost in the presence of sufficiently large dephasing. Remarkably, in the special case of a QB in contact with a very high temperature environment (i.e. for GADCs in the limit $\eta \to \frac{1}{2}$), the battery is able to preserve all of its input ergotropy. \\

Building on these considerations and results, this work may serve as a starting point for a more realistic analysis of the performance of quantum batteries. While noisy channels as the ones analyzed in this paper are in general simplified models of those occurring in experimental setups, our findings can be of assistance in experiment design by providing indications of the level of noise resistance and storing strategies required to observe a quantum advantage for energy storage quantum devices.  \\

\begin{acknowledgments}
We acknowledge financial support by MUR (Ministero dell’ Universit\`a e della Ricerca) through the following projects: 
PNRR MUR project PE0000023-NQSTI, PRIN 2017 Taming complexity via Quantum Strategies: a Hybrid Integrated Photonic approach (QUSHIP) Id. 2017SRN-BRK, and project PRO3 Quantum Pathfinder. S.C. is also supported by a grant through the IBM-Illinois
\end{acknowledgments}

\clearpage

\appendix
\section{Ergotropy: a brief review} \label{sec.rev} 

The maximization on the right-hand-side of Eq.~(\ref{ergoeeeDEFinvariance}) admits a closed expression in terms of the passive counterpart  of $\dstate$, i.e. the
 density matrix $\dstate_{\text{pass}}$ obtained  by operating on the latter via 
 a unitary rotation that transforms its eigenvectors $\{ |\lambda_\ell\rangle\}_{\ell}$ into the eingenvectors $\{ |E_\ell\rangle\}_{\ell}$ of the system Hamiltonian $\ham$, matching the corresponding eingenvalues in reverse order, i.e. 
 \begin{eqnarray} 
 \left.
 \begin{array}{l} 
 \dstate =\sum_{\ell =1}^{d} \lambda_\ell |\lambda_\ell\rangle \! \langle\lambda_\ell|  \\ \\
\ham=\sum_{\ell =1}^{d} E_\ell |E_\ell\rangle\!\langle E_\ell| 
 \end{array} \right\} \mapsto
 \dstate_{\text{pass}} := \sum_{\ell =1}^{d} \lambda_\ell |E_\ell\rangle\!\langle E_\ell| \;, \nonumber \\ \label{deftutto} 
 \end{eqnarray} 
 where for all $\ell=1,\cdots, d-1$,  we set $\lambda_{\ell} \geq \lambda_{\ell+1}$ and 
 $E_\ell  \leq E_{\ell+1}$. Explicitly we can write 
 \begin{equation}\label{ergoeee} 
	\ergo(\dstate;\ham) =  \en(\dstate;\ham)-  \en(\dstate_{\text{pass}};\ham)  
	=\en(\dstate;\ham)- \sum_{\ell=1}^{d} \lambda_\ell E_\ell\;.
	\end{equation}
	Notice that for the special case in which the quantum system is a qubit with Hamiltonian $\hat{H}=|1\rangle\langle 1|$,
	Eq.~(\ref{ergoeee}) gives,
	\begin{eqnarray}\label{ergoqubit}  
	\ergo(\dstate;\ham) = \langle 1 |\dstate|1\rangle - \lambda_{\min}(\dstate)\;,  
	\end{eqnarray} 
	with $\lambda_{\min}(\dstate)$ the minimum eigenvalue of $\dstate$. 
	
The total-ergotropy $\ergo_{\rm tot}(\dstate;\ham)$ is a regularized version of  
$\ergo(\dstate;\ham)$ that emerges when considering scenarios where one has at disposal an arbitrary large number of identical copies of the input state $\dstate$. 
Formally it is defined as 
\begin{eqnarray}\label{totergoeeeDEFtotal} 
\ergo_{\rm tot}(\dstate;\ham)  &:=& \lim_{N\rightarrow \infty} \frac{\ergo(\dstate^{\otimes N};\ham^{(N)})}{N}\;,
	\end{eqnarray}
where for fixed $N$ integer, $\ham^{(N)}$ is the total Hamiltonian of the $N$ copies of the system obtained by assigning to each of them the same $\ham$ (no interactions being included). 
One can show\cite{Alicki2013, PhysRevA.105.012414} that the limit in Eq.~ (\ref{totergoeeeDEFtotal}) exists and corresponds to the maximum of $\frac{\ergo(\dstate^{\otimes N};\ham^{(N)})}{N}$ with respect to all possible $N$, implying in particular that $\ergo_{\rm tot}(\dstate;\ham)$ is at least as large as $\ergo(\dstate;\ham)$, i.e.
 \begin{eqnarray} \label{qubitimpo1} 
 \ergo_{\rm tot}(\dstate;\ham) =\sup_{N\geq 1} \frac{\ergo(\dstate^{\otimes N};\ham^{(N)})}{N}  \geq \ergo(\dstate;\ham)\;. 
 \end{eqnarray} 
 The  case where the system has dimension 2 represents an exception to this rule as in this case one has that
  \begin{eqnarray} \label{qubitsimpo} 
 \ergo_{\rm tot}(\dstate;\ham) =\frac{\ergo(\dstate^{\otimes N};\ham^{(N)})}{N}  =\ergo(\dstate;\ham)\;, 
 \end{eqnarray} 
 for all inputs and for all $N$. 
Most notably $\ergo_{\rm tot}(\dstate;\ham)$ can be  expressed via a single letter formula that mimics Eq.~(\ref{ergoeee}), i.e. 
\begin{equation} \label{GIBBS} 
\ergo_{\rm tot}(\dstate;\ham) =
 \en(\dstate;\ham)-  \en^{({\beta_\star})}_{{\small{\rm GIBBS}}}(\ham)\;, 
\end{equation} 
where for $\beta \geq 0$ and $Z_{\beta}({\ham}) : =\mbox{Tr}[ e^{ - \beta \ham}] = 
 \sum_{\ell=1}^d  e^{-\beta E_\ell}$
\begin{eqnarray} 
\en^{({\beta})}_{{\small{\rm GIBBS}}}(\ham)&:=& \en(\hat{\tau}_{\beta};\ham)=
- \frac{d}{d\beta} \ln Z_{\beta}({\ham}) 
\;, 
\end{eqnarray} 
is the mean energy of the thermal Gibbs state 
\begin{eqnarray} \label{GIBBS1} 
\hat{\tau}_{\beta}:= e^{ - \beta \ham}/Z_{\beta}({\ham})\;, \end{eqnarray} 
with effective inverse temperature $\beta$; while finally $\beta_\star$ is chosen so that  $\dstate_{\beta_\star}$
has the same von Neumann entropy of $\dstate$, i.e.
\begin{eqnarray} S_{\beta_\star} = S(\dstate) := -\mbox{Tr}[\dstate \ln \dstate]\;,
\end{eqnarray}  
with $S_{\beta}:= -\mbox{Tr}[\hat{\tau}_{\beta} \ln \hat{\tau}_{\beta}]= - \beta  \frac{d}{d\beta} \ln Z_{\beta}({\ham})+ \ln Z_\beta({\ham})$.

For many-body quantum systems, local ergotropy is defined by restricting to only local transformations the optimization over the unitary set of Eq.~(\ref{ergoeeeDEFinvariance}), i.e.
\begin{equation}\label{ergoeeeDEFinvarianceloc} 
\ergo_{\rm loc}(\dstate;\ham)  := \max_{\hat{U} \in {\mathbb U}_{\rm loc}(d)}\Big\{  \en(\dstate;\ham) 
-\en(\hat{U} \dstate \hat{U}^\dag;\ham)\Big\}   \;.
\end{equation}
By construction it provides a lower bound for $\ergo(\dstate;\ham)$ and in case the system is represented by $k$ non-interacting particles (i.e. if $\ham$ is given by a sum of local terms $\ham=\ham_1+ \cdots + \ham_k$) it reduces to the sum of local 
contributions,  
\begin{eqnarray}\label{ergoeeeDEFinvariancelocFACT} 
\ergo_{\rm loc}(\dstate;\ham)  = \sum_{i=1}^k \ergo(\dstate_k;\ham_k)\;,
\end{eqnarray}
where for $i=\in\{ 1,\cdots, k\}$ the $\dstate_i$ represents the reduced density matrix of the $i$-th subsystem of the model.

\section{Alternative characterization of the MAWER} \label{app:physmaw}
First of all we notice that for any quantum channel $\Lambda$ the quantity $\ergo_{\diamond}^{(n)}(\Lambda;E)$ is non-decreasing in $n$ for any fixed input energy $E \in (0,\mathfrak{e}_{max}]$. I.e. $\forall\; n\in\mathbb{N}$ $\ergo^{(n+1)}_{\diamond}(\Lambda;E) \geq \ergo^{(n)}_{\diamond}(\Lambda;E)$, this property is proven in \cite{quantumworkcapacitances}. Due to this fact it easily follows that
\begin{equation} \label{eq:suplim}
\sup_{n\geq\lceil E/\mathfrak{e}_{max} \rceil} \ergo^{(n)}_{\diamond}(\Lambda;E) = \lim_{n\to\infty}\ergo^{(n)}_{\diamond}(\Lambda;E) \; .
\end{equation}
We now define the quantity $\tilde{\mathcal{J}}_{\diamond}(\Lambda)$ as
\begin{equation}
\tilde{\mathcal{J}}_{\diamond}(\Lambda) := \liminf_{E\to\infty}\lim_{n\to\infty}\frac{\ergo^{(n)}_{\diamond}(\Lambda;E)}{E} \; ,
\end{equation}
clearly for any quantum channel $\Lambda$ $\tilde{\mathcal{J}}_{\diamond}(\Lambda) \leq \mathcal{J}_{\diamond}(\Lambda)$ we now need to prove the converse.
\begin{lemma} \label{lemma:physmawer}
For any quantum channel $\Lambda$ the following equation holds true: 
\begin{equation}
\tilde{\mathcal{J}}_{\diamond}(\Lambda) \geq \sup_{\mathfrak{e}\in(0,\mathfrak{e}_{max}]}\frac{C_{\diamond}(\Lambda;\mathfrak{e})}{\mathfrak{e}} = \mathcal{J}_{\diamond}(\Lambda) \;.
\end{equation}
\begin{proof}
By definition we have that for any $\mathfrak{e}\in(0,\mathfrak{e}_{max}]$
\begin{eqnarray}
\tilde{\mathcal{J}}_{\diamond}(\Lambda) &\geq& \liminf_{E\to\infty}\frac{\ergo^{(\lceil E/\mathfrak{e} \rceil - 1)}_{\diamond}(\Lambda;(\lceil E/\mathfrak{e} \rceil -1)\mathfrak{e})}{E} \nonumber \\
&=&  \liminf_{E\rightarrow\infty} 
\left(\tfrac{\lceil E/\mathfrak{e}\rceil -1}{E} \right)
\frac{\ergo_{\diamond}^{(\lceil E/\mathfrak{e}\rceil-1)}(\Lambda;(\lceil E/\mathfrak{e}\rceil -1)\mathfrak{e})}{\lceil E/\mathfrak{e}\rceil-1} \nonumber \\
&=& \frac{C_{\diamond}(\Lambda;\mathfrak{e})}{\mathfrak{e}} \; ,
\end{eqnarray}
in the last passage we used the fact that for all the terms of the expression the limit inferior is equal to the limit, so we conclude 
\end{proof}
\end{lemma}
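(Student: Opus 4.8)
The plan is to mirror the lower-bound argument already used for the $\limsup$ version in the proof of Theorem~\ref{theo1}, and to check that it survives the replacement of $\limsup_{E\to\infty}$ by the more restrictive $\liminf_{E\to\infty}$ hidden inside $\tilde{\mathcal{J}}_{\diamond}$. First I would fix an arbitrary $\mathfrak{e}\in(0,\mathfrak{e}_{\max}]$ and establish the single-$\mathfrak{e}$ bound $\tilde{\mathcal{J}}_{\diamond}(\Lambda)\geq C_{\diamond}(\Lambda;\mathfrak{e})/\mathfrak{e}$; supremizing over $\mathfrak{e}$ afterwards then yields the first inequality of the statement. The starting observation is Eq.~(\ref{eq:suplim}): since $\ergo^{(n)}_{\diamond}(\Lambda;E)$ is non-decreasing in $n$, its inner limit equals a supremum and therefore dominates any admissible finite-$n$ term. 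Taking first $n=\lceil E/\mathfrak{e}\rceil\geq\lceil E/\mathfrak{e}_{\max}\rceil$ and then lowering both the index and the energy exactly as in Eq.~(\ref{MAWERDEFineq}) — using monotonicity in $n$ together with $E\geq(\lceil E/\mathfrak{e}\rceil-1)\mathfrak{e}$ — gives
\begin{equation}
\lim_{n\to\infty}\ergo^{(n)}_{\diamond}(\Lambda;E)\;\geq\; \ergo^{(\lceil E/\mathfrak{e}\rceil-1)}_{\diamond}\big(\Lambda;(\lceil E/\mathfrak{e}\rceil-1)\mathfrak{e}\big)\,.
\end{equation}

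Dividing by $E$, taking $\liminf_{E\to\infty}$, and factoring the right-hand side as $\big(\tfrac{\lceil E/\mathfrak{e}\rceil-1}{E}\big)\cdot\tfrac{\ergo^{(m)}_{\diamond}(\Lambda;m\mathfrak{e})}{m}$ with $m=\lceil E/\mathfrak{e}\rceil-1$, I reach the heart of the argument. The key point — and the place where using $\liminf$ rather than $\limsup$ actually costs nothing — is that both factors converge to genuine limits as $E\to\infty$: the first tends to $1/\mathfrak{e}$ by an elementary ceiling estimate, while the second tends to $C_{\diamond}(\Lambda;\mathfrak{e})$ because $C_{\diamond}$ is defined, as in Eq.~(\ref{cergo}), as the honest limit (not a mere $\limsup$) of $\ergo^{(n)}_{\diamond}(\Lambda;n\mathfrak{e})/n$, and $m\to\infty$ as $E\to\infty$. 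A product of two convergent sequences converges to the product of the limits, so the $\liminf$ collapses to an ordinary limit equal to $C_{\diamond}(\Lambda;\mathfrak{e})/\mathfrak{e}$. Supremizing over $\mathfrak{e}$ then gives $\tilde{\mathcal{J}}_{\diamond}(\Lambda)\geq\sup_{\mathfrak{e}\in(0,\mathfrak{e}_{\max}]}C_{\diamond}(\Lambda;\mathfrak{e})/\mathfrak{e}$.

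For the displayed equality $\sup_{\mathfrak{e}}C_{\diamond}(\Lambda;\mathfrak{e})/\mathfrak{e}=\mathcal{J}_{\diamond}(\Lambda)$ I would invoke Theorem~\ref{theo1}: the bound~(\ref{MAWERDEFineqtrivialnew1}) already gives $\mathcal{J}_{\diamond}(\Lambda)\geq\sup_{\mathfrak{e}}C_{\diamond}(\Lambda;\mathfrak{e})/\mathfrak{e}$, while the identity $\mathcal{J}_{\diamond}(\Lambda)=\lim_{\mathfrak{e}\to0}C_{\diamond}(\Lambda;\mathfrak{e})/\mathfrak{e}\leq\sup_{\mathfrak{e}}C_{\diamond}(\Lambda;\mathfrak{e})/\mathfrak{e}$ supplies the reverse, forcing equality. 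Chaining this with the elementary bound $\tilde{\mathcal{J}}_{\diamond}(\Lambda)\leq\mathcal{J}_{\diamond}(\Lambda)$ (a $\liminf$ never exceeds the corresponding $\limsup$, as already noted above) closes the loop and shows all three quantities coincide, which is exactly what legitimizes the double-limit form~(\ref{eq:truemaw}). I expect the only genuine obstacle to be this convergence step: one must ensure the second factor is a true limit rather than a $\limsup$ — it is, because the defining limit of $C_{\diamond}$ exists — and verify that the index $m=\lceil E/\mathfrak{e}\rceil-1$ is admissible, which holds since the $m$ cells can host energy $m\mathfrak{e}$ whenever $\mathfrak{e}\leq\mathfrak{e}_{\max}$. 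Everything else transcribes the Theorem~\ref{theo1} computation essentially verbatim.
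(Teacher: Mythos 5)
Your proposal is correct and follows essentially the same route as the paper: the same chain of monotonicity bounds with $n=\lceil E/\mathfrak{e}\rceil-1$, the same factorization, and the same observation that the $\liminf$ collapses to a true limit because $C_{\diamond}$ is defined as a genuine limit. Your explicit justification of the equality $\sup_{\mathfrak{e}}C_{\diamond}(\Lambda;\mathfrak{e})/\mathfrak{e}=\mathcal{J}_{\diamond}(\Lambda)$ via Theorem~\ref{theo1} is a welcome addition, since the paper leaves that step implicit.
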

Thanks to Lemma \ref{lemma:physmawer} we can now state that the MAWER can be written as
\begin{equation}
\mathcal{J}_{\diamond}(\Lambda) = \lim_{E\to\infty}\lim_{n\to\infty}\frac{\ergo^{(n)}_{\diamond}(\Lambda;E)}{E} \; ,
\end{equation}
here the limits do not commute.

\end{document}